\newtheorem{thm}{Theorem}[section]
\newtheorem{lem}[thm]{Lemma}
\newtheorem{prop}[thm]{Proposition}
\newtheorem{cor}[thm]{Corollary}
\crefname{thm}{Theorem}{Theorems}
\crefname{lem}{Lemma}{Lemmas}
\theoremstyle{definition}
\newtheorem{defn}{Definition}[section]
\newcommand{\reals}{{\mathbb{R}}}
\newcommand{\naturals}{{\mathbb{N}}}
\newcommand{\eps}{\varepsilon}
\newcommand{\poly}{{\operatorname{poly}}}
\newcommand{\expect}{{\mathbb{E}}}
\newcommand{\vctr}[1]{\bm{\mathrm{#1}}}
\newcommand{\sgn}{{\operatorname{sgn}}}
\newcommand{\pratio}{{\mathsf{PR}}}
\newcommand{\tpratio}{{\mathsf{TPR}}}
\newcommand{\golden}{\varphi}
\newcommand{\tX}{{\tilde{X}}}
\newcommand{\tth}{{\tilde{\theta}}}
\newenvironment{lparray}%
{\begin{array}[t]{l@{\hspace{8mm}}l@{\hspace{8mm}}l}}%
{\end{array}}
\newlength{\lplb}
\title{Constrained-Order Prophet Inequalities}
\author[1]{Makis Arsenis}
\author[2]{Odysseas Drosis}
\author[1]{Robert Kleinberg}
\affil[1]{Cornell University, Ithaca, NY 14853, USA.\protect\\
{\footnotesize \texttt{\{marsenis,rdk\}@cs.cornell.edu}}}
\affil[2]{EPFL, 1015 Lausanne, Switzerland.\protect\\
{\footnotesize \texttt{odysseas.drosis@epfl.ch}}}
\date{}
\begin{document}

\begin{titlepage}
\maketitle

\begin{abstract}
  Free order prophet inequalities bound the ratio between
  the expected value obtained by two parties each selecting
  one value from a set of independent random variables:
  a ``prophet'' who knows the value of each variable and
  may select the maximum one, and a ``gambler'' who is free to
  choose the order in which to observe the values but must
  select one of them immediately after observing it, without
  knowing what values will be sampled for the unobserved
  variables. It is known that the gambler can always
  ensure an expected payoff at least 0.669\dots times
  as great as that of the prophet. In fact, even if the
  gambler uses a threshold stopping rule, meaning there
  is a fixed threshold value such that the gambler rejects
  every sample below the threshold and accepts every sample
  above it, the threshold can always be chosen so that
  the gambler-to-prophet ratio is at least
  $1-\frac1e \approx 0.632\ldots$. In contrast, if
  the gambler must observe the values in a predetermined
  order, the tight bound for the
  gambler-to-prophet ratio is $1/2$.

  In this work we investigate a model that interpolates between
  these two extremes. We assume there is a predefined set of
  permutations of the set indexing the random variables, and the
  gambler is free to choose the order of observation to be any
  one of these predefined permutations. Surprisingly, we show
  that even when only two orderings are allowed --- namely,
  the forward and reverse orderings --- the gambler-to-prophet
  ratio improves to $\golden^{-1} = \frac12 (\sqrt{5} - 1) = 0.618\ldots$,
  the inverse of the golden ratio.
  As the number of allowed permutations grows
  beyond 2, a
  striking ``double plateau'' phenomenon emerges:
  after increasing from $0.5$ to $\golden^{-1}$ when
  two permutations are allowed, the gambler-to-prophet
  ratio achievable by threshold stopping rules
  does not exceed $\golden^{-1} + o(1)$ until the
  number of allowed permutations grows to $O(\log n)$.
  The ratio reaches $1 - \frac1e - \eps$ for a suitably
  chosen set of $O(\poly(\eps^{-1}) \cdot \log n)$ permutations
  and does not exceed $1 - \frac1e$ even when the full set
  of $n!$ permutations is allowed.
\end{abstract}

\end{titlepage}

\section{Introduction}

When one is planning under uncertainty,
the order in which decisions must be made
can powerfully influence the quality of
the eventual outcome. It is therefore
advantageous for a decision maker to be
able to control the order of the decisions
they will face. For example,
a new Ph.D.\ student might prefer to
know if they will be offered a position in
the research group of their top-choice advisor
before having to decide whether to accept
a position with their second-choice advisor,
but they might not have the freedom to
dictate the order in which those two decisions
are made.

Optimal stopping theory furnishes a theoretical
basis for quantifying the benefit of exercising
control over the order of one's decisions in
uncertain environments. In the classical optimal
stopping problem there is a sequence of independent random
variables $X_1,\ldots,X_n$ with known distributions,
and one aims to design and analyze a stopping rule
$\tau$ that maximizes the expected value of the sample
selected by $\tau$, namely $\expect X_\tau$. Optimal stopping
with order selection, introduced by \citet{hill83},
posits that the decision maker can choose the order
in which to observe the random variables {\em and}
the time at which to stop the permuted sequence.
In other words, the aim is now to design a
permutation $\pi : [n] \to [n]$ and a
stopping rule $\tau$ adapted to the sequence
$X_{\pi(1)},\ldots,X_{\pi(n)}$, such
that $\expect X_{\tau}$ is maximized.
(In principle one could also consider a model
in which the permutation $\pi$ is specified
adaptively, i.e.~the value $\pi(k)$ is
allowed to depend on $X_{\pi(1)},\ldots,X_{\pi(k-1)}$.
One of the main results of \citet{hill83}, Theorem 3.11,
shows that the optimal adaptive ordering is no better
than the optimal non-adaptive one.)

Prophet inequalities, which bound the ratio between
the expected value selected by an optimal stopping rule
(metaphorically, a ``gambler'')
and the expectation of the maximum value in the
sequence (metaphorically, a ``prophet''),
are a fruitful way of analyzing the
effectiveness of different types of stopping rules.
For optimal stopping without order selection,
\citet{KS77,KS78} famously proved that the gambler's
expected value is always at least one-half that of
the prophet, and the ratio 1/2 in this bound is the
best possible.
\citet{S-C} showed that the optimal ratio 1/2 remains
attainable if the gambler is constrained to use a
{\em threshold stopping rule}, which sets
a fixed threshold $\theta$ and commits to reject
every sample less than $\theta$ and to stop
whenever it encounters a sample strictly
greater than $\theta$.

For the optimal stopping problem with order selection,
bounds on the ratio between the gambler's and prophet's
expected values are known as {\em free order prophet
inequalities}. The first free order prophet inequality
was proven by \citet{yan-free-order}, who showed that
the gambler-to-prophet ratio is always at least
$1 - \frac1e = 0.632\ldots$.
This bound was later shown to be attainable
even if the gambler is constrained to observe the values in
uniformly-random order \citep{proph-sec} and to use a
threshold stopping rule \citep{blind-strat}\footnote{%
  To achieve gambler-to-prophet ratio $1-\frac1e$
  using a threshold stopping rule, it is vital that
  our definition of threshold stopping rule treats
  its behavior at time $t$ as being unconstrained
  if the value $X_t$ is equal to the threshold $\theta$;
  the requirement of stopping at time $t$ is only triggered
  if $X_t > \theta$. For a stricter definition of
  threshold stopping rule that requires stopping
  whenever $X_t \geq \theta$ and continuing otherwise,
  \citet{proph-sec} showed that 1/2 is the best
  achievable gambler-to-prophet ratio.
}. Furthermore,
$1 - \frac1e$ is asymptotically
the best possible ratio attainable by threshold stopping
rules \citep{delegated-search}, even if the distributions of
$X_1,\ldots,X_n$ are identical. However, general stopping
rules can do strictly better: the optimal factor in the
free-order prophet inequality is known to be between $0.669\ldots$
and $0.745\ldots$, and closing the gap between
these two bounds is a major open question.

\subsection{Our contributions}

To summarize the foregoing discussion,
the gap between the gambler-to-prophet
ratios attainable with or without order selection
formalizes, and quantifies, the advantage that a
decision maker gains by being able to control the
order in which decisions are made under uncertainty.
But how much control over the ordering is needed
to gain this advantage? Our paper initiates
an in-depth exploration of that question.
We introduce
{\em constrained-order prophet inequalities},
in which there is a predefined set $\Pi \subseteq S_n$
of permutations of $[n]$, and the gambler is allowed to reorder the
random variables $X_1,\ldots,X_n$ using any permutation
in $\Pi$ before running a stopping rule.
\begin{defn} \label{def:prophet}
  A non-empty set of permutations
  $\Pi \subseteq S_n$ is said to satisfy a
  {\em constrained-order prophet inequality with factor $\alpha$}
  if for {\em every} $n$-tuple of distributions
  $X_1,\ldots,X_n$ supported on the non-negative reals,
  there is a permutation $\pi \in \Pi$ and a stopping
  rule $\tau$ adapted to $X_{\pi(1)},X_{\pi(2)},\ldots,X_{\pi(n)}$,
  such that
  \begin{equation} \label{eq:copi}
    \expect X_\tau \geq \alpha \cdot
    \expect \left[ \max_{1 \leq i \leq n} X_i \right] .
  \end{equation}
  The {\em prophet ratio} of $\Pi$,
  $\pratio(\Pi)$, is the supremum of all $\alpha$
  such that $\Pi$ satisfies a constrained-order prophet
  inequality with factor $\alpha$.
  {\em Constrained-order threshold
  prophet inequalities} and
  the {\em threshold
  prophet ratio} $\tpratio(\Pi)$ are defined similarly,
  but allowing the gambler to optimize only over threshold
  stopping rules rather than all stopping rules.
\end{defn}
In the extreme cases where $\Pi$ has only one
element or $\Pi$ is the entire permutation
group $S_n$, one recovers the definitions of
prophet inequality and free-order prophet inequality,
respectively.
Constrained-order prophet inequalities
interpolate between these two extremes
and allow
us to gain deeper insight into how and why
optimizing the order of decisions leads to
better outcomes for optimal stopping rules.
Interestingly, our first main result shows
that much of the benefit of order selection
can be gained by choosing the better of just
{\em two} permutations, corresponding
to the forward and reverse orderings.
\begin{thm} \label{thm:fwd-rev}
  Let $\Pi = \{\iota,\rho\}$, where $\iota$
  and $\rho$ are the permutations of $[n]$
  that place its elements in forward and
  reverse order, respectively.
  The threshold prophet ratio of $\Pi$
  is
  $  \tpratio(\Pi) = \golden^{-1} = \tfrac12
     \left( \sqrt{5} - 1 \right) = 0.618\ldots .
  $
\end{thm}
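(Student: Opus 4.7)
The plan is to establish matching bounds $\tpratio(\Pi) \geq \golden^{-1}$ and $\tpratio(\Pi) \leq \golden^{-1}$.

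\textbf{Lower bound.} Fix an instance $X_1, \dots, X_n$ and a threshold $\theta$. Set $q_i = \Pr[X_i > \theta]$, $r_i = \expect[(X_i-\theta)^+]$, $R = \sum_i r_i$, $Q = \prod_i (1-q_i)$, $P_i^F = \prod_{j<i}(1-q_j)$, and $P_i^R = \prod_{j>i}(1-q_j)$. A standard Samuel-Cahn style computation (tracking the ``accept the last sample'' default) yields
\[
V_F \;\geq\; \theta(1-Q) + \sum_i P_i^F\, r_i, \qquad V_R \;\geq\; \theta(1-Q) + \sum_i P_i^R\, r_i,
\]
where $V_F, V_R$ are the expected payoffs of the threshold rule at level $\theta$ under the forward and reverse orderings. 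Averaging and applying AM-GM termwise,
\[
\tfrac{1}{2}(V_F + V_R) \;\geq\; \theta(1-Q) + \sum_i \sqrt{P_i^F P_i^R}\, r_i \;\geq\; \theta(1-Q) + \sqrt{Q}\,R,
\]
using $P_i^F P_i^R = Q/(1-q_i) \geq Q$. I would then choose $\theta$ so that $Q = \golden^{-2}$, so that $\sqrt{Q} = 1 - Q = \golden^{-1}$ by the identity $\golden^{-2} + \golden^{-1} = 1$. Combining with the standard upper bound $\expect[\max_i X_i] \leq \theta + R$ gives $\max(V_F, V_R) \geq \golden^{-1}(\theta + R) \geq \golden^{-1}\expect[\max_i X_i]$. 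Discrete distributions are handled by a routine randomization between two adjacent thresholds that straddle $Q = \golden^{-2}$.

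\textbf{Upper bound.} I would construct a sequence of instances approaching the bound. Partition $[n]$ into three equal consecutive blocks. In each outer block, let $X_i = v_L$ with probability $p_L = 6\ln\golden / n$ (else $0$); in the middle block, let $X_i = v_H$ with probability $p_H = 3\delta / n$ (else $0$), with $v_L = \golden^2/(\golden^2+1)$, $v_H = 1/\delta$, and $\delta > 0$ small. As $n \to \infty$, each outer block's ``no hit'' probability tends to $x := e^{-2\ln\golden} = \golden^{-2}$, and the middle block's to $y := e^{-\delta}$ (taken $\to 1$ by sending $\delta \to 0$). The instance is forward-reverse symmetric, so $V_F = V_R$; only two threshold ranges are competitive. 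With $\theta \in (v_L, v_H)$ the gambler skips outer hits and waits for a middle hit, earning $\approx v_H(1-y) \to 1$; with $\theta \in (0, v_L)$ the gambler stops at the first hit of either kind, earning $\approx v_L(1-x)(1+xy) + v_H x(1-y) \to \golden^{-1} + \golden^{-2} = 1$. Hence the gambler's value $\to 1$, while $\expect[\max_i X_i] \to v_H(1-y) + v_L(1-x^2) = 1 + \golden^{-1} = \golden$ (using $v_L(1-x^2) = \golden^{-1}$), so the ratio tends to $\golden^{-1}$.

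\textbf{Main obstacle.} The chief difficulty is engineering the upper bound so that \emph{every} threshold choice simultaneously underperforms: the three-block structure is forced by forward-reverse symmetry together with the requirement that the ``default'' value at either endpoint vanish, and the specific parameters are determined by equating the ``greedy'' and ``patient'' threshold strategies. This balance reproduces exactly the golden-ratio identity $\golden^{-2} + \golden^{-1} = 1$, which is also what makes the lower bound's AM-GM step and the bound $\expect[\max_i X_i] \leq \theta + R$ asymptotically tight at the same $\theta$.
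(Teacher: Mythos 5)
Your lower bound is essentially the paper's own argument (its \Cref{thm:golden-is-achievable}): average the threshold rule's value over $\{\iota,\rho\}$, bound the ``survival'' coefficient of each $\expect[(X_i-\theta)^+]$ by $\tfrac12(a_i+b_i)\ge\sqrt{a_ib_i}\ge\sqrt{Q}$ via AM--GM, and pick $\theta$ with $Q=\golden^{-2}$ so that $1-Q=\sqrt{Q}=\golden^{-1}$. The one genuine gap is your parenthetical treatment of atoms: randomizing between two thresholds $\theta^-,\theta^+$ that straddle $Q=\golden^{-2}$ does \emph{not} routinely recover the bound. Averaging your two inequalities with weight $\lambda$ chosen so that $\lambda(1-Q_1)+(1-\lambda)(1-Q_2)=\golden^{-1}$ puts the coefficient $\lambda\sqrt{Q_1}+(1-\lambda)\sqrt{Q_2}$ in front of $R$, and by concavity of the square root this is \emph{strictly less} than $\sqrt{\lambda Q_1+(1-\lambda)Q_2}=\golden^{-1}$ whenever $Q_1<Q_2$; since the argument is tight at $Q=\golden^{-2}$ there is no slack to absorb the loss. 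The correct fix is per-variable randomized tie-breaking at the atom, so that the acceptance probability (hence $Q$) varies continuously and can be set to exactly $\golden^{-2}$; this is admissible because the definition of a threshold rule leaves behavior at ties unconstrained, and it is exactly what the paper's auxiliary variables $\tX_i$ and lexicographic thresholds in \Cref{sec:prelim} are set up to do.

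Your upper bound is a legitimate alternative to the paper's: the paper uses a three-variable instance (\Cref{lem:3-distribs}: $X_1,X_3$ uniform on $[1-\delta,1]$, $X_2=(\sqrt5-1)/\delta$ with probability $\delta$, padded with zeros via \Cref{cor:3-indices}), analyzing a quadratic in the acceptance probability of the outer values, whereas you use a symmetric three-block construction with a discrete safe value $v_L=\golden^2/(\golden^2+1)$ and outer no-hit probability $\golden^{-2}$. Your arithmetic checks out: $v_L(1-\golden^{-4})=\golden^{-1}$, both competitive threshold ranges give gambler value tending to $1$, the prophet tends to $\golden$, and the reversal symmetry correctly eliminates any benefit from the second order (the paper achieves the same by making $X_1$ and $X_3$ identically distributed). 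Two caveats, one of which matters: (a) with $p_L=6\ln\golden/n$ the block no-hit probability equals $\golden^{-2}$ only as $n\to\infty$, so as written you prove $\tpratio(\Pi)\le\golden^{-1}+o_n(1)$ rather than the exact equality for every fixed $n\ge3$ asserted by the theorem (the paper's \Cref{thm:iotarho}); this is easily repaired by taking the block hit probability to be exactly $1-\golden^{-2}$, or by collapsing each block to a single two-point variable and padding the remaining coordinates with zeros. (b) You should note that thresholds sitting exactly at the atoms $v_L$ or $v_H$ only interpolate between the two ranges you analyzed, so they cannot help the gambler.
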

One may interpret this result as lending
support to the intuition that most of the
benefit of order selection stems from the
ability to schedule a high-risk high-reward
option (e.g.~a value that is
$1/\eps$ with probability $\eps$, else
zero) in the first half of the decision
sequence, leaving safer options for afterward.

Given the large quantitative gain
in $\tpratio(\Pi)$ when $\Pi$ is
enlarged to contain the reverse ordering
of the input sequence,
one might expect to extract
additional significant gains
when $\Pi$ is allowed to contain
three permutations, or an even
larger constant number of them.
Surprisingly, we show this is not the
case: to exceed the golden-ratio
bound at all one needs a super-constant
number of permutations, and to exceed
it by any constant $\eps>0$
one needs a logarithmic number of
them.
\begin{thm} \label{thm:golden-ratio}
  If $n \ge 3$ and $|\Pi| < \sqrt{\log n}$
  then $\tpratio(\Pi) \le \golden^{-1}$.
  For any $\eps>0$, if $|\Pi| < \log_{1/\eps}(n)$
  then $\tpratio(\Pi) \le \golden^{-1} + O(\eps)$.
\end{thm}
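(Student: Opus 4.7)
The plan is to exhibit hard instances by reducing $\Pi$-hardness to the forward--reverse case of \Cref{thm:fwd-rev}. Let $\mathcal{I}_k$ denote a family of $k$-variable extremal instances from (the hardness direction of) \Cref{thm:fwd-rev}, so that the threshold prophet ratio of $\{\iota_k,\rho_k\}$ on $\mathcal{I}_k$ is at most $\golden^{-1}+\delta(k)$, with $\delta(k)\to 0$ as $k\to\infty$.

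Given $\Pi\subseteq S_n$ with $|\Pi|=m$, the first step is combinatorial: locate a subset $S\subseteq[n]$ of size $k$ such that the restriction of every $\pi\in\Pi$ to $S$ lists the elements of $S$ either in their natural order (as elements of $[n]$) or in its reverse. Writing $S=\{s_1<\cdots<s_k\}$, this means each $\pi\in\Pi$ lists the $s_i$ in increasing or decreasing order of $i$. Iterated Erd\H{o}s--Szekeres delivers such an $S$ of size at least $n^{1/2^{m-1}}$: one pairwise application against each additional permutation forces the new permutation to be monotone on the current working set while shrinking it by a square root.

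The second step is to plant $\mathcal{I}_k$ on $S$ and set $X_i\equiv 0$ for $i\notin S$. Since any positive threshold ignores the zero coordinates, the gambler's problem on the full instance reduces to a constrained-order problem on $k$ variables under $\{\iota_k,\rho_k\}$, so $\tpratio(\Pi)\le \golden^{-1}+\delta(k)$. To recover the two parts of the theorem, we choose $k$ appropriately: for the first, we want $k\to\infty$ as $n\to\infty$ under $|\Pi|<\sqrt{\log n}$ so that $\delta(k)\to 0$; for the second, we fix $k=k(\eps)$ large enough that $\delta(k)\le\eps$.

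\textbf{The main difficulty} is the quantitative match. Naive iterated Erd\H{o}s--Szekeres only certifies $k\to\infty$ in the weaker regime $|\Pi|=o(\log\log n)$, which does not reach the claimed threshold $\sqrt{\log n}$ (and similarly for the $\eps$-version). I expect the full proof to require one of two refinements. First, a sharpened combinatorial lemma exploiting that a threshold stopping rule with threshold $\theta$ only ``sees'' the ordering induced on the active set $\{i:P(X_i>\theta)>0\}$: by layering the hard instance across several threshold levels, one can amortize the subset-selection cost across levels rather than paying a square root per permutation in a single pass. Alternatively, a direct construction at the $[n]$-level that bypasses the subset reduction entirely --- embedding a $\golden^{-1}$-bad profile of values/probabilities on all of $[n]$ that is robust to every $\pi\in\Pi$ simultaneously --- may be required to close the gap to $\sqrt{\log n}$.
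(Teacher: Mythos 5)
Your Erdős–Szekeres reduction is essentially the paper's own \Cref{cor:3-indices} in disguise: finding a set $S=\{s_1<s_2<s_3\}$ on which every $\pi\in\Pi$ is monotone is exactly the condition that $\pi^{-1}(s_2)$ lies between $\pi^{-1}(s_1)$ and $\pi^{-1}(s_3)$ for all $\pi$. The paper explicitly observes (citing \citep{nonunif}) that this condition can only force $|\Pi| > \log\log n$, never more — so the plateau of your route at $m=o(\log\log n)$ is not an artifact of a loose application of Erdős–Szekeres but an intrinsic limitation. Note also that you don't need $k\to\infty$: \Cref{lem:3-distribs} already witnesses $\golden^{-1}$ exactly with $k=3$ (the $\delta$ in that lemma is an instance parameter, not a size parameter), so the issue is purely whether a monotone triple exists, and that stalls at doubly-logarithmic size.

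Your second ``alternative,'' a direct construction on all of $[n]$ that is simultaneously bad for every $\pi\in\Pi$, is in fact what the paper does, but it requires a genuinely new idea that your sketch does not supply. The paper introduces the notion of an \emph{$\eps$-centered} index $j$ (\Cref{def:centered}): a probability distribution $p$ on $[n]\setminus\{j\}$ such that under every $\sigma=\pi^{-1}$ the mass on each side of $j$ is at least $\tfrac12-\eps$. Given such a $j$ and $p$, the hard instance puts the ``risky'' $(\sqrt5-1)/\delta$-vs.-$0$ variable at $j$ and gives each $i\neq j$ a distribution $1-X_i=Y_i\wedge 1$ with $Y_i$ exponential of rate $p(i)/\delta$; the weights $p(i)$ allow the product $\prod_{i\in I_0}\Pr(X_i<\theta)$ to be controlled additively via $\sum_{i\in I_0}p(i)$, which is exactly where centeredness enters (\Cref{lem:golden-is-optimal}). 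Showing that some $j$ must be centered when $m<\sqrt{\log n}$ (resp. $\eps$-centered when $m<\log_{1/\eps} n$) is then a separate combinatorial step (\Cref{lem:centered}): encode each pair $(i,j)$ as a sign vector in $\{\pm1\}^m$, observe that failure of centeredness for all $j$ yields $n$ pairwise-distinct linear threshold functions on $\{\pm1\}^m$ (bounded by $2^{m^2}$), and for the $\eps$-version pass to an LP dual and a packing argument in the $L_1$ ball. None of this is recoverable from monotone-subsequence extraction; the distribution $p$ and the analytic form of the instance are essential. So the proposal as written has a genuine gap, and the ``difficulty'' you flag is the whole theorem.
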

Recall that if the gambler is allowed to
order the elements arbitrarily, the best possible
threshold prophet ratio is $\tpratio(S_n) = 1 - \frac1e + o(1)$.
Our third and final main result shows that a logarithmic
number of permutations are sufficient to get within
$\eps$ of this bound, and that a quadratic number of
permutations suffice to match the $1 - \frac1e$ bound
exactly.
\begin{thm} \label{thm:pseudorandom}
  For every $n \in \mathbb{N}$ and
  $\eps>0$, there is a set $\Pi$ consisting
  of $O(\poly(\eps^{-1}) \cdot \log n)$ permutations
  such that $\tpratio(\Pi) > 1 - \frac1e - \eps$.
  There is also a set $\Pi$ consisting of
  $O(n^2)$ permutations such that $\tpratio(\Pi) \geq
  1 - \frac1e$.
\end{thm}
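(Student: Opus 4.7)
Both parts will be proved by derandomizing the random-order threshold prophet inequality of \citet{proph-sec}, which states that, with a threshold $\theta$ determined by the marginal distributions and $\pi$ drawn uniformly from $S_n$, one has $\expect_\pi[X_\tau] \ge (1-\tfrac1e)\expect[\max_i X_i]$. Since $\max_{\pi \in \Pi} \expect[X_\tau] \ge \expect_{\pi \sim \mathrm{Unif}(\Pi)}[X_\tau]$, it suffices to exhibit a small explicit $\Pi$ whose average matches this random-$\pi$ bound (up to $\eps$ for Part 1, exactly for Part 2). I would first re-examine the \citet{proph-sec} proof to isolate the statistics of $\pi$ that actually enter: the probabilities $\Pr_\pi[i \text{ is the earliest element of } S \text{ in } \pi]$ for subsets $S \subseteq [n]$, which equal $1/|S|$ under the uniform distribution.

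\textbf{Part 1: $O(\poly(\eps^{-1})\log n)$ permutations.} The plan is to let $\Pi$ be the support of an $\eps$-approximately $k$-wise independent distribution on $S_n$ with $k = O(\log(1/\eps))$; standard constructions produce such a distribution with support of size $O(\poly(\eps^{-1}) \cdot \log n)$. I would then argue that the first-in-$S$ probabilities are $1/|S| \pm \eps$ for every $|S| \le k$, while the contribution from larger $S$ is controlled by a tail bound on $|S| = |\{i : X_i > \theta\}|$: the canonical threshold choice yields $\expect[|S|] = O(1)$, so a Chernoff-type bound makes the $|S| > k$ contribution at most $O(\eps)$. Substituting these error estimates into the analysis yields $\tpratio(\Pi) \ge 1 - \tfrac1e - O(\eps)$. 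The main obstacle is quantifying how the $|S| \le k$ truncation propagates into an additive $\eps$ loss in $\tpratio$; this is sensitive to how $\theta$ is chosen and to how the analysis decomposes the prophet's expectation.

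\textbf{Part 2: $O(n^2)$ permutations for the exact bound.} Here I would instantiate $\Pi$ as an \emph{exactly} pairwise-independent family --- concretely, the affine permutations $\pi_{a,b}(i) = (a i + b) \bmod q$ for $a \in \mathbb{F}_q^{*}$ and $b \in \mathbb{F}_q$, where $q$ is the smallest prime $\ge n$ --- giving $|\Pi| = q(q-1) = O(n^2)$. The objective is to rewrite the $(1-\tfrac1e)$ analysis so that it consumes only the pairwise statistic $\Pr_\pi[i \text{ precedes } j] = \tfrac12$, for instance by bounding the probability of stopping on each $i$ via an inclusion--exclusion truncation that uses only pairwise information. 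If pairwise independence turns out to be genuinely insufficient for the exact bound, a natural fallback is a $c$-wise independent permutation family for a suitable absolute constant $c$, also of size $O(n^2)$. The chief obstacle is showing that bounded-wise independence is strong enough for the \emph{exact} $(1-\tfrac1e)$ bound rather than merely an approximate one; this may require re-deriving the bound from scratch in a form that depends only on low-order moments of $\pi$.
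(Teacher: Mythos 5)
Your instinct for Part 2 --- affine permutations $\pi_{a,b}(i) = ai+b \pmod q$ over $\mathbb{F}_q$ for $q$ prime near $n$, giving a pairwise-independent family of size $O(n^2)$ --- is exactly the construction the paper uses, and the paper also needs the same ``pad $n$ up to a prime'' trick (formalized as a padding lemma). But the analytic engine you sketch for using it is not the paper's, and this is where the real content lies. Your stated framework decomposes the stopping probability via $\Pr_\pi[i \text{ is the earliest element of } S]$; for $|S|\ge 3$ that quantity is not determined by pairwise statistics of $\pi$, so pairwise independence does not directly let you evaluate it, and you correctly flag your own uncertainty that pairwise might not suffice. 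The paper closes this gap with an AM--GM step you don't have: it writes the survival probability before index $i$ as $\sum_S p_{k,i}(S)\prod_{j\in S} q_j$ (with $q_j=\Pr[X_j<\theta]$), and uses the arithmetic--geometric mean inequality to lower-bound this convex combination by the geometric mean $\prod_j q_j^{\Pr(\sigma(j)<k\mid\sigma(i)=k)}$, which \emph{is} a pairwise statistic. This is what makes pairwise independence exactly enough for the exact $1-\tfrac1e$ bound; your proposed ``inclusion--exclusion truncation'' or retreat to $c$-wise independence is a detour around a difficulty that the AM--GM argument dissolves.

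For Part 1 your route ($\eps$-approximately $k$-wise independent permutations with $k=O(\log(1/\eps))$, plus a Chernoff tail bound on the number of above-threshold indices) is genuinely different from the paper's. The paper instead defines an $(\eps,\delta)$-almost pairwise independent family: it buckets positions into $1/\eps$ blocks of length $\eps n$ and asks that the joint distribution of bucket indices of any pair $(\sigma(i),\sigma(j))$ be $\delta$-close to uniform; a random family of $O((\eps\delta)^{-2}\log n)$ permutations has this property (with $\delta = \eps^2$ giving $O(\eps^{-6}\log n)$). The same AM--GM argument, run bucket-by-bucket, then gives $1-\tfrac1e-O(\eps)$. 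The paper's route has the advantage of still consuming only pairwise statistics and reusing the Part 2 machinery verbatim; your route would need you to verify two things you currently gloss over: (i) that there is an explicit or probabilistic construction of $\eps$-almost $k$-wise independent permutations of support $O(\poly(\eps^{-1})\log n)$ in the form your analysis requires, and (ii) that the first-in-$S$ probabilities you plug in really are estimable to the needed accuracy from $k$-wise statistics given the particular truncation scheme. Both are plausible but neither is automatic, whereas once you have the AM--GM lemma, the paper's approach makes the pseudorandomness requirement transparently pairwise.
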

Taken together, our results give a nearly complete
answer to the question: for a given $\alpha$,
what is the smallest $m$ such that there
exists an $m$-element set of permutations
whose threshold prophet ratio is at least
$\alpha$? The answer is:
\begin{enumerate}
  \item $m=1$ for $\alpha \leq \frac12$ \citep{KS77,KS78};
  \item $m=2$ for $\frac12 < \alpha \leq \golden^{-1}$
    (\Cref{thm:fwd-rev,thm:golden-ratio});
  \item $m = \Theta(\log n)$ for $\golden^{-1} < \alpha < 1 - \frac1e$,
    with the constant inside the $\Theta(\cdot)$ depending
    on the value of $\alpha$ (\Cref{thm:pseudorandom});
  \item $m = O(n^2)$ for $\alpha = 1 - \frac1e$ (\Cref{thm:pseudorandom});
  \item there is no set of permutations whose
  threshold prophet ratio is greater than $1 - \frac1e$
  \citep{delegated-search}; see also \Cref{prop:1-1/e-optimal} below.
\end{enumerate}
It would be desirable, of course, to gain
a similarly comprehensive understanding of
the smallest set of permutations needed to
achieve a given prophet ratio (rather than
threshold prophet ratio), but at present
this seems out of reach: even determining
the prophet ratio of the full permutation
group, $\pratio(S_n)$, is a major open problem
as it amounts to determining the best possible
constant in the free order prophet inequality.
For now, the most we can say is that
our results imply lower bounds on the
prophet ratio $\pratio(\Pi)$ for the
permutation sets $\Pi$ we study,
due to the trivial observation that
$\pratio(\Pi) \geq \tpratio(\Pi)$ for
every set $\Pi$.


In summary, then, the message of our paper
is as follows: a gambler solving an
optimal stopping problem
using a threshold stopping rule can
achieve significant gains if they
are allowed to control the order in
which they observe the values in the
sequence, but most of these gains can
be achieved just by choosing the better
of the forward or reverse ordering, and
nearly all of the gains can be
achieved by choosing among a logarithmic
number of predefined permutations.

\subsection{Relation to the prophet secretary problem}

Although we have motivated and presented
our results in terms of free-order and
constrained-order prophet inequalities,
i.e.~a setting in which the gambler chooses
the order in which values are observed,
our results also have a bearing on the
prophet secretary problem due to their
method of proof. In the prophet secretary
problem~\citep{proph-sec,azar18ec,prophet-matroid-sec}
the gambler observes $n$ independent random variables
in {\em uniformly random order}, and once
again the question is what gambler-to-prophet
ratio can be guaranteed. The best bounds
currently known, due to \citet{blind-strat},
show that the answer is at least $0.669\ldots$
and at most $\sqrt{3}-1 = 0.732\ldots$.

The constrained-order prophet inequalities
asserted in \Cref{thm:fwd-rev,thm:golden-ratio,thm:pseudorandom}
are all proven by constructing a small
set of permutations, $\Pi$, and analyzing
the performance of a threshold stopping
rule when the order in which values are
observed is drawn uniformly at random
from $\Pi$. Thus, all of our results can
also be interpreted as constructing
``pseudo-random'' distributions over
permutations that have small support
size, but ensure a gambler-to-prophet ratio
(for threshold stopping rules)
that is nearly as good as what can
be achieved in the prophet secretary
problem when the order of observation is
sampled uniformly at random.

\section{Preliminaries}
\label{sec:prelim}

This section presents definitions, notations,
and conventions that will be used throughout the paper.

Let $X_1,X_2,\ldots,X_n$ be any $n$-tuple of
independent random variables supported on the
non-negative reals.
A {\em stopping rule} adapted to the
sequence $X_1,\ldots,X_n$ is a random
variable $\tau$ taking values in $[n] \cup \{\bot\}$,
such that for all $i \in [n]$
the event $\{ \tau = i \}$ is
measurable with respect to the
$\sigma$-field generated by
$X_1,\ldots,X_i$. One interprets
$\tau$ as defining the time at
which a gambler selects a value
from the sequence --- with $\tau = \bot$
denoting the event that no value is
selected --- and the measurability
condition expresses the notion that
the gambler must decide whether or not
to select $i$ after having seen only
the values $X_1,\ldots,X_i$.
For a stopping rule $\tau$, the
random variable $X_\tau$ is defined
as follows: if $\tau = i \in [n]$
then $X_\tau = X_i$, and if $\tau = \bot$
then $X_\tau = 0$.
If $\pi : [n] \to [n]$ is a permutation,
a stopping rule $\tau$ is called
{\em $\pi$-adapted} if it is adapted
to the sequence $X_{\pi(1)},\ldots,X_{\pi(n)}$,
in other words, the event $\{\tau = \pi(i)\}$
is measurable with respect to the $\sigma$-field
generated by $X_{\pi(1)},X_{\pi(2)},\ldots,X_{\pi(i)}$.

A stopping rule $\tau$ is a {\em threshold
stopping rule} with threshold $\theta$
if it never selects a value strictly less
than $\theta$, and it always selects the
first value strictly greater than $\theta$.
In other words, $\tau$ must satisfy the
following constraints for all $i \in [n]$:
\[
  X_i < \theta \Rightarrow \tau \neq i, \qquad
  X_i > \theta \Rightarrow \tau \leq i.
\]
Note that the tie-breaking behavior of a
threshold stopping rule is unconstrained:
when $X_i = \theta$, then $\tau = i$ is
allowed but not required\footnote{As noted
in the introduction, this tie-breaking
convention is not universal. For example
\citet{proph-sec} adopt the stricter convention
that $X_i = \theta$ implies $\tau = i$.}.
If the distributions of $X_1,\ldots,X_n$
have no point-masses, the event $X_i = \theta$
has probability zero so the tie-breaking
convention is inconsequential. To reduce
from the general case to the case where
tie-breaking is inconsequential, we adopt
the following artificial but convenient
convention. We assume that in addition
to the random variables $X_1,\ldots,X_n$,
there is an auxiliary sequence of random
variables $\tX_1,\ldots,\tX_n$, each
uniformly distributed in $[0,1]$,
independent of $X_1,\ldots,X_n$
and mutually independent of one another.
These auxiliary values are used for
tie-breaking as follows. The pairs
$\{ (X_i,\tX_i) \}_{i=1}^n$ are
regarded as elements of $\reals \times [0,1]$
under the lexicographic ordering.
For every $\tth \in [0,1]$, the
threshold stopping rule with
threshold $(\theta,\tth)$ is
defined as
  \[ \tau = \min \{ i \in [n] \mid (X_i, \tX_i) \ge (\theta,\tth) \} \]
where, again, the relation $<$ is
interpreted lexicographically.
We make the following observations.
\begin{enumerate}
  \item For all $i \in [n]$, the event $(X_i,\tX_i) = (\theta,\tth)$
    has probability zero.
  \item For any $i \in [n]$ and $p \in (0,1),$
    we can find $(\theta,\tth)$ such that
    $\Pr((X_i,\tX_i) > (\theta,\tth)) = p$.
  \item Similarly, if $(X_*,\tX_*)$ denotes the
    $(\reals \times [0,1])$-valued random variable
    that is the lexicographic maximum of
    $\{(X_i,\tX_i)\}_{i=1}^n$,
    then for any $p \in (0,1)$ we can find
    $(\theta,\tth)$
    such that $Pr((X_*,\tX_*) > (\theta,\tth)) = p$.
\end{enumerate}
In short, by treating the sequence of $n$ random
variables as taking values in $\reals \times [0,1]$
(lexicographically ordered) and treating the
threshold as belonging to the same set, we can
make the same continuity and no-tie-breaking
assumptions that are always justified in the
case of point-mass-free distributions, but
without having to assume the distributions
of $X_1,\ldots,X_n$ have no point masses.

To avoid cumbersome notation in what follows,
when discussing threshold stopping rules we
will still denote the threshold by $\theta$
rather than $(\theta,\tth)$ and we'll use the
notation $X_i < \theta$ or $X_* < \theta$
as shorthand for $(X_i,\tX_i) < (\theta,\tth)$
or $(X_*,\tX_*) < (\theta,\tth)$. In short,
we'll treat the distributions of $X_1,\ldots,X_n$
as if they were free of point masses, depending
on the conventions set forth in this section
to justify that the results derived under the
point-mass-free assumption extend to the case
of general distributions.

For a threshold $(\theta,\tth)$ and a permutation
$\pi$, the stopping rule $\tau(\pi,\theta,\tth)$
selects the earliest element of the
sequence $X_{\pi(1)},\ldots,X_{\pi(n)}$ such that
$(X_{\pi(i)},\tX_{\pi(i)}) \geq (\theta,\tth)$.
More precisely, define $\tau(\pi,\theta,\tth)$
as follows. If $(X_i,\tX_i) < (\theta,\tth)$
for all $i \in [n]$ then $\tau(\pi,\theta,\tth) = \bot$.
Otherwise, $\tau(\pi,\theta,\tth) = \pi(i_{\min})$
where $i_{\min}$ is the minimum $i$ such that
$(X_{\pi(i)},\tX_{\pi(i)}) \geq (\theta,\tth)$.
We will use the notation $X_{\pi,\theta}$ as shorthand
for $X_{\tau(\pi,\theta,\tth)}$.

The set of all permutations of $[n]$ is
denoted by $S_n$.
Two permutations that are important
in this work are the identity permutation
$\iota(k) = k$ and its reverse,
$\rho(k) = n+1-k$.
Recall that for a non-empty
subset $\Pi \subseteq S_n$, the terms
{\em constrained-order (threshold) prophet inequality}
and {\em (threshold) prophet ratio} were
defined above, in \Cref{def:prophet}.
The threshold prophet ratio of $\Pi$ is
denoted by $\tpratio(\Pi)$.

\section{Using the forward and reverse permutations}
\label{sec:fwd-rev}

In this section, we let $\Pi = \{\iota,\rho\}$
and prove that $\tpratio(\Pi) = \golden$.

\begin{thm} \label{thm:golden-is-achievable}
  For every $n$-tuple of independent random
  variables $X_1,\ldots,X_n$, there exists a
  threshold $\theta$ such that
  \begin{equation} \label{eq:golden-is-achievable}
    \expect_{\pi} \left[ \expect X_{\pi,\theta} \right]
    \geq
    \golden^{-1} \cdot \expect X_* ,
  \end{equation}
  where the outer expectation on the left side is over
  a uniformly random choice of $\pi \in \Pi = \{\iota,\rho\}$.
\end{thm}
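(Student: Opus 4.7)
The plan is to fix a threshold $\theta$, write the expected gambler's value under each of the two permutations in closed form, average them, and compare to a standard union-bound upper estimate on the prophet. For a threshold $\theta$, let $q_i = \Pr[X_i \geq \theta]$, $v_i = \expect[X_i \mathbf{1}\{X_i \geq \theta\}]$, and $P = \prod_{i=1}^n(1-q_i) = \Pr[X_* < \theta]$. The prophet side is controlled by the usual inequality
\[
  \expect X_* \;\le\; \theta + \sum_{i=1}^n \expect\bigl[(X_i - \theta)^+\bigr]
  \;=\; \theta + \sum_{i=1}^n (v_i - \theta q_i),
\]
which follows from $(X_* - \theta)^+ \le \sum_i (X_i - \theta)^+$ and $X_* \le \theta + (X_* - \theta)^+$.

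Next, introduce $L_i = \prod_{j<i}(1-q_j)$ and $R_i = \prod_{j>i}(1-q_j)$. Unwinding the definitions gives
\[
  \expect X_{\iota,\theta} = \sum_{i=1}^n v_i L_i, \qquad
  \expect X_{\rho,\theta} = \sum_{i=1}^n v_i R_i.
\]
The telescoping identities $\sum_i q_i L_i = \sum_i q_i R_i = 1 - P$ let me rewrite the average as
\[
  \tfrac12\bigl(\expect X_{\iota,\theta} + \expect X_{\rho,\theta}\bigr)
  \;=\; \theta(1-P) + \tfrac12\sum_{i=1}^n (v_i - \theta q_i)(L_i + R_i).
\]
Comparing with the prophet bound, it suffices to produce a threshold for which
\[
  \theta\bigl[(1-P) - \golden^{-1}\bigr]
  + \sum_{i=1}^n (v_i - \theta q_i)\bigl[\tfrac{L_i+R_i}{2} - \golden^{-1}\bigr] \;\ge\; 0.
\]

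The key step is to choose $\theta$ so that $P = \golden^{-2}$; the tie-breaking convention from \Cref{sec:prelim} (the auxiliary variables $\tX_i$) guarantees that such a threshold exists for every distribution, since $\Pr[X_* < (\theta,\tth)]$ attains every value in $(0,1)$. The golden-ratio identity $\golden^{-1} + \golden^{-2} = 1$ makes the first bracket vanish, while AM-GM bounds the second bracket:
\[
  \frac{L_i + R_i}{2} \;\ge\; \sqrt{L_i R_i} \;=\; \sqrt{\tfrac{P}{1-q_i}} \;\ge\; \sqrt P \;=\; \golden^{-1}.
\]
Since $v_i - \theta q_i = \expect[(X_i-\theta)^+] \ge 0$, every summand is non-negative, and the inequality holds, giving the desired factor.

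I expect the execution to be almost entirely routine once the threshold is chosen; the only conceptual step is noticing the synergy between the two bounds $(1-P) \ge \golden^{-1}$ and $(L_i+R_i)/2 \ge \sqrt P$, which are simultaneously tight precisely at the golden-ratio point $P = \golden^{-2}$. The main care needed is to justify the existence of the $\theta$ with $P = \golden^{-2}$ in the presence of atoms, but this is exactly what the $(\theta,\tth)$ convention from \Cref{sec:prelim} was set up to handle, so no additional argument is required.
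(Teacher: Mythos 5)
Your proposal is correct and follows essentially the same route as the paper: both average the forward and reverse threshold rules, identify the per-index survival probability as $\tfrac12(L_i+R_i)$ (the paper's $c_i = \tfrac12(a_i+b_i)$), bound it below by $\sqrt{P}$ via AM--GM, pick the threshold so that $\Pr[X_*<\theta]=\golden^{-2}$ (equivalently the paper's $q=\golden^{-1}$), and close with the standard prophet upper bound $\expect X_* \le \theta + \sum_i \expect[(X_i-\theta)^+]$. The only cosmetic deviation is that you pass through $L_iR_i = P/(1-q_i)$, which is an indeterminate $0/0$ when some $q_i=1$; the paper sidesteps this by writing the inequality $L_iR_i = \prod_{j\neq i}(1-q_j) \ge P$ directly, which is the cleaner form to use.
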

\begin{proof}
  For a given threshold $\theta$, let $p = \Pr(X_* \ge \theta)$.
  We have
  \begin{equation} \label{eq:gia.1}
    \expect X_* \le \theta + \expect[(X_*-\theta)^+] \le
      \theta + \sum_{i=1}^n \expect[(X_i - \theta)^+] .
  \end{equation}
  For the random variable $X_\tau = X_{\pi,\theta}$, we have
  \begin{equation} \label{eq:gia.2}
    \expect X_\tau = p \theta + \expect[(X_\tau - \theta)^+]
      = p \theta + \sum_{i=1}^n c_i \expect[(X_i - \theta)^+]
  \end{equation}
  where
  \begin{equation} \label{eq:yi}
    c_i = \frac12 \left(
      \prod_{j=1}^{i-1} \Pr(X_j < \theta) +
      \prod_{j=i+1}^n \Pr(X_j < \theta)
    \right)
  \end{equation}
  denotes the probability that no element is
  selected before the stopping rule observes
  $X_i$, given that the sequence is observed
  in forward or reverse order with equal probability.
  Letting
  \begin{align*}
    a_i  = \prod_{j=1}^{i-1} \Pr(X_j < \theta), & \qquad
    b_i  = \prod_{j=i+1}^n \Pr(X_j < \theta),
  \end{align*}
  we have
  \begin{align}
    a_i b_i & \geq \prod_{j=1}^n \Pr(X_j < \theta) = 1-p  \\
    c_i & = \tfrac12 (a_i + b_i) \geq (a_i b_i)^{1/2} \geq \sqrt{1-p} ,
    \label{eq:gia.3}
  \end{align}
  where the second line follows from the arithmetic-geometric
  mean inequality. Letting $q = \sqrt{1-p}$ and
  substituting $c_i \geq q$ and $p = 1-q^2$
  back into Eq.~\eqref{eq:gia.2} we obtain
  \begin{equation} \label{eq:gia.4}
    \expect X_\tau \geq
       (1 - q^2) \theta + q \sum_{i=1}^n \expect[(X_i - \theta)^+] .
  \end{equation}
  Choosing $\theta$ so that $q = \golden^{-1}$, which implies also
  $1 - q^2 = q = \golden^{-1}$, we find that
  \begin{equation} \label{eq:gia.5}
    \expect X_\tau \geq \golden^{-1} \cdot \left( \theta + \sum_{i=1}^n \expect[(X_i - \theta)^+] \right)
      \geq \golden^{-1} \cdot \expect X_*,
  \end{equation}
  where the second inequality follows from~\eqref{eq:gia.1}.
\end{proof}

Next we prove that the golden-ratio bound in \Cref{thm:golden-is-achievable}
is the best possible.

\begin{lem} \label{lem:3-distribs}
  For all $\eps > 0$
  there exists a sequence of three independent random variables $X_1,X_2,X_3$
  such that $X_1$ and $X_3$ are identically distributed, and for every threshold
  stopping rule $\tau$ we have
  \begin{equation} \label{eq:3-distribs}
    \expect X_\tau < ( \golden^{-1} + \eps ) \cdot \expect X_* .
  \end{equation}
\end{lem}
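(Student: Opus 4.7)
The plan is to exhibit a concrete three-variable instance, parameterized by a small $\beta > 0$, on which the optimal threshold rule (including randomized tiebreaks) attains expected value $(5 - \sqrt{5})/2 + O(\beta)$ while the prophet attains $\sqrt{5} - \beta$. Since $(5-\sqrt{5})/(2\sqrt{5}) = (\sqrt{5}-1)/2 = \golden^{-1}$, the ratio $\expect X_\tau/\expect X_*$ approaches $\golden^{-1}$ from above as $\beta \to 0^+$, and choosing $\beta$ sufficiently small yields the strict inequality $\expect X_\tau < (\golden^{-1} + \eps)\cdot\expect X_*$.

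I would set $X_1 = X_3 = 1$ deterministically and let $X_2 = M$ with probability $\beta$ and $X_2 = 0$ otherwise, where $M := (\sqrt{5}-1)/\beta$. Clearly $X_1$ and $X_3$ are identically distributed, and $\expect X_* = (1-\beta) + \beta M = \sqrt{5} - \beta$. Because $X_1$ and $X_3$ are i.i.d.\ and jointly independent of $X_2$, the forward and reverse orderings induce the same joint law of $(X_{\pi(1)}, X_{\pi(2)}, X_{\pi(3)})$, so it suffices to analyze threshold rules on the forward order $(X_1, X_2, X_3)$.

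Next I would enumerate the threshold choices $(\theta, \tth)$. Three ``pure'' regimes are immediate: $\theta < 1$ accepts $X_1 = 1$ and gains $1$; $\theta \in (1, M)$ skips $X_1$ and $X_3$ and gains $\beta M = \sqrt{5}-1$; and $\theta \geq M$ gains at most $\sqrt{5}-1$. The crucial case is $\theta = 1$ with randomized tiebreak: writing $s \in [0,1]$ for the probability (determined by $\tth$) of accepting a value equal to $1$, conditioning on the position of stopping gives
\[
  \expect X_\tau \;=\; s \cdot 1 + (1-s)\beta M + (1-s)(1-\beta)\,s
    \;=\; s + (1-s)(\sqrt{5}-1) + s(1-s)(1-\beta).
\]
This quadratic in $s$ is maximized at $s^\star = (3 - \sqrt{5} - \beta)/(2(1-\beta))$, with value $(5 - \sqrt{5})/2 + O(\beta)$, which strictly dominates the pure-threshold values $1$ and $\sqrt{5}-1$ for small $\beta$. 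Taking the ratio,
\[
  \frac{\expect X_\tau}{\expect X_*}
   \;=\; \frac{(5 - \sqrt{5})/2 + O(\beta)}{\sqrt{5} - \beta}
   \;\xrightarrow{\,\beta \to 0^+\,}\; \golden^{-1},
\]
so for any $\eps > 0$ one can pick $\beta > 0$ small enough to guarantee the ratio is strictly below $\golden^{-1} + \eps$.

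The main subtlety I anticipate is handling the randomized-tiebreak regime at $\theta = 1$: restricting to pure thresholds alone gives best ratio $(\sqrt{5}-1)/\sqrt{5} \approx 0.553$, strictly below $\golden^{-1}$, so the resulting upper bound on $\tpratio(\Pi)$ would fail to match \Cref{thm:golden-is-achievable}. The mixed threshold is precisely what lets the gambler simultaneously extract some expected value from the safe endpoint values and preserve a nontrivial chance of catching the spike $M$ at position $2$, and verifying that the envelope of these options is the quadratic above (rather than one of the ``pure'' options) is the calculation that aligns the asymptotic ratio with the golden ratio.
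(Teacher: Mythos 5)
There is a genuine gap: your construction is defeated by a threshold stopping rule that the paper's definition allows. In the preliminaries, a threshold rule with threshold $\theta$ is only required to satisfy $X_i < \theta \Rightarrow \tau \neq i$ and $X_i > \theta \Rightarrow \tau \leq i$; its behavior on ties $X_i = \theta$ is \emph{unconstrained}, and in particular it may break ties differently at different positions (the lemma, and the definition of $\tpratio$, quantify over all such rules). With $X_1 = X_3 = 1$ deterministically and $\theta = 1$, the gambler may simply reject the tie at time $1$ and accept the tie at time $3$: this rule stops on $X_2 = M$ when the spike occurs and otherwise takes $X_3 = 1$, so $\expect X_\tau = \beta M + (1-\beta) = \sqrt{5} - \beta = \expect X_*$, i.e.\ ratio $1$, not $\golden^{-1} + O(\beta)$. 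Your single tie-break parameter $s$ implicitly forces the \emph{same} acceptance probability at both endpoints, which only covers the canonical $(\theta,\tth)$ rules with a shared auxiliary threshold; it does not cover position-dependent tie-breaking, which is exactly the adversarial rule above.

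The paper avoids this by making the endpoint distributions atomless: $X_1, X_3$ are uniform on $[1-\delta,1]$, so for any threshold $\theta \in [1-\delta,1]$ the acceptance probability $r = (1-\theta)/\delta$ is pinned down by $\theta$ itself, tie-breaking is irrelevant, and passing at time $1$ genuinely forfeits value rather than guaranteeing a unit payoff later. With that modification the rest of your calculation goes through essentially as in the paper: the gambler's value is at most $r + (1-r)(\sqrt{5}-1) + (r-r^2)$, maximized at $r = \tfrac12(3-\sqrt{5})$ with value $\tfrac12(5-\sqrt{5}) = \golden^{-1}\sqrt{5}$, against a prophet value exceeding $\sqrt{5} - 2\delta$, and letting $\delta \to 0$ gives the claimed bound. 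So the quadratic ``envelope'' idea is right, but the instance must be smeared over an interval (or otherwise made robust to arbitrary, position-dependent tie-breaking) for the lemma as stated to hold.
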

\begin{proof}
  Fix a parameter $\delta > 0$ to be determined later, and suppose $X_1$ and $X_3$
  are uniformly distributed in $[1 - \delta, 1]$ while $X_2$ is equal to
  $(\sqrt{5}-1)/\delta$ with probability $\delta$, otherwise $X_2=0$.
  Then
  \begin{equation} \label{eq:3d.1}
    \expect X_* > \delta \cdot (\sqrt{5}-1)/\delta \, + \, (1 - \delta) \cdot (1 - \delta)
        > \sqrt{5} - 2 \delta.
  \end{equation}
  If $\tau$ is a threshold stop rule
  with threshold $\theta$ there are three cases to consider.
  When $\theta < 1 - \delta$, we have $\expect X_\tau = 1 - \frac{\delta}{2}$.
  When $\theta > 1$ we have $\expect X_\tau = \sqrt{5} - 1$.
  In both of these cases, $\expect X_\tau < (\golden^{-1} + \eps) \cdot \expect X_*$
  provided $\delta$ is sufficiently small.
  The remaining case is when $1 - \delta \leq \theta \leq 1$. In this case,
  let $r = (1 - \theta) / \delta$, so that $\Pr(X_1 > \theta) = r$.
  Then
  \begin{align} \nonumber
    \expect X_\tau & \le r \cdot 1 \, + \,
                       (1-r) \delta \cdot (\sqrt{5}-1)/\delta \, + \,
                       (1-r)(1-\delta)r \cdot 1 \\
                   & \le r + (1-r)(\sqrt{5}-1) + (r - r^2)
                   = (\sqrt{5} - 1) + (3 - \sqrt{5}) r - r^2 .
    \label{eq:3d.2}
  \end{align}
  The right side of~\eqref{eq:3d.2} is maximized when $r =
  \frac12 ( 3 - \sqrt{5} )$, when it equals $(\sqrt{5} - 1) + \frac14 (14 - 6 \sqrt{5}) = \frac12 (5 - \sqrt{5})$.
  Hence,
  \begin{equation} \label{eq:3d.3}
    \expect X_\tau \le \frac12 (5 - \sqrt{5}) = \golden^{-1} \cdot \sqrt{5}.
  \end{equation}
  Combining~\eqref{eq:3d.1} with~\eqref{eq:3d.3} we see that
  the conclusion of the lemma holds, as long as $\delta$
  is chosen small enough that $(\golden^{-1} + \eps) \cdot (\sqrt{5} - 2 \delta) \geq \golden^{-1} \cdot \sqrt{5}$.
\end{proof}

\begin{cor} \label{cor:3-indices}
  Suppose $\Pi$ is a non-empty set of permutations of $[n]$
  and
      $i,j,k \in [n]$ are three distinct indices such
      that $\pi^{-1}(j)$ is between $\pi^{-1}(i)$ and $\pi^{-1}(k)$
      for all $\pi \in \Pi$.
  Then $\tpratio(\Pi) \leq \golden^{-1}$.
\end{cor}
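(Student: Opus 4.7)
The plan is to pull back the three-variable construction of \Cref{lem:3-distribs} to a tailored $n$-variable instance using the three special indices $i$, $j$, $k$. First I would fix $\eps > 0$ and invoke \Cref{lem:3-distribs} to obtain independent non-negative random variables $Y_1, Y_2, Y_3$ with $Y_1$ and $Y_3$ identically distributed, such that every threshold stopping rule $\sigma$ satisfies $\expect Y_\sigma < (\golden^{-1} + \eps) \cdot \expect Y_*$. Then I would build an $n$-tuple $X_1, \ldots, X_n$ by setting $X_i = Y_1$, $X_j = Y_2$, $X_k = Y_3$, and $X_\ell \equiv 0$ for every $\ell \notin \{i, j, k\}$, with all variables mutually independent. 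Because the $Y_r$ are non-negative, the overall maximum coincides with the maximum of the three non-zero variables, so $\expect[\max_\ell X_\ell] = \expect Y_*$.

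Next, for any $\pi \in \Pi$ and any threshold stopping rule $\tau$ with threshold $(\theta, \tth)$ applied to $X_{\pi(1)}, \ldots, X_{\pi(n)}$, I would use the hypothesis on $(i, j, k)$ to observe that $\pi$ induces one of the two orderings $(i, j, k)$ or $(k, j, i)$ on the three non-zero indices. Since $Y_1$ and $Y_3$ are i.i.d., the non-zero subsequence observed in either case has the same joint distribution as $(Y_1, Y_2, Y_3)$, so \Cref{lem:3-distribs} applies to it. Let $\sigma$ be the threshold stopping rule with the same threshold $(\theta, \tth)$ applied to this three-variable subsequence.

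The crux of the argument is a pointwise coupling. For each realization of the randomness, $\tau$ either stops at one of $\{i, j, k\}$, in which case it stops at exactly the same index as $\sigma$ (the earliest of the three whose lexicographic value reaches $(\theta, \tth)$); or it stops at a zero entry, in which case $X_\tau = 0 \leq X_\sigma$; or $\tau = \bot$, in which case $\sigma = \bot$ too. Thus $\expect X_\tau \leq \expect X_\sigma = \expect Y_\sigma$, and \Cref{lem:3-distribs} yields $\expect X_\tau < (\golden^{-1} + \eps) \cdot \expect[\max_\ell X_\ell]$. Since this holds for every $\pi \in \Pi$ and every threshold rule, $\tpratio(\Pi) \leq \golden^{-1} + \eps$; letting $\eps \to 0$ completes the proof. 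The only mildly subtle point is verifying the pointwise coupling in the presence of the lexicographic tie-breaking convention at the point mass $X_\ell = 0$, but the issue is benign: for thresholds with $\theta \leq 0$ the rule $\tau$ may stop on a zero entry, which only decreases $X_\tau$ and therefore strengthens rather than weakens $X_\tau \leq X_\sigma$.
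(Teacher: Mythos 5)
Your proof is correct and follows essentially the same route as the paper: you build the same instance, embedding the three-variable gadget of \Cref{lem:3-distribs} at positions $i,j,k$ and padding the remaining coordinates with zeros, then reduce to the three-variable bound. The only substantive difference is one of rigor rather than strategy: the paper restricts attention to thresholds $\theta > 0$ (under which the zero coordinates are automatically skipped, making the $n$-variable threshold rule behave identically to a threshold rule on the 3-variable gadget) and leaves the low-threshold case implicit, whereas you handle all thresholds at once via an explicit pointwise coupling $X_\tau \le X_\sigma$, noting that stopping on a zero coordinate can only hurt the gambler. That is a small but genuine tightening of the argument; otherwise the two proofs are the same.
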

\begin{proof}
  Define an $n$-tuple of independent random variables
  $X_1,\ldots,X_n$ by specifying that the distributions
  of $X_{i},X_{j},X_{k}$ are identical to the
  distributions of $X_1,X_2,X_3$ specified
  in \Cref{lem:3-distribs} above, and for $\ell \not\in \{i,j,k\}$
  let the distribution of $X_\ell$ be identically zero.
  For any $\pi \in \Pi$, a $\pi$-adapted
  threshold stopping rule with threshold $\theta > 0$
  will skip $X_{\ell}$ for every $\ell \not\in \{i,j,k\}$,
  and it will observe the values $X_i,X_j,X_k$ at times
  $\pi^{-1}(i), \pi^{-1}(j), \pi^{-1}(k)$. Note that the
  distributions of these three variables and the order
  in which they are observed are identical to the
  distributions $X_1,X_2,X_3$ specified in
  \Cref{lem:3-distribs}, so the inequality~\eqref{eq:3-distribs}
  will be satisfied. As $\tau$ is an arbitrary threshold
  stopping rule, and $\eps > 0$ is arbitrarily small, we
  conclude that $\Pi$ does not satisfy a constrained-order
  prophet inequality with factor $\alpha$ for any
  $\alpha > \golden^{-1}$.
\end{proof}

\begin{thm} \label{thm:iotarho}
  For the permutation set $\Pi = \{\iota,\rho\}$,
  if $n \geq 3$, we have $\tpratio(\Pi) = \golden^{-1}$.
\end{thm}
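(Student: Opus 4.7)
The plan is to combine the lower bound from \Cref{thm:golden-is-achievable} with the upper bound from \Cref{cor:3-indices}, both of which are already established. The theorem asserts equality, so the argument has two halves.

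For the lower bound $\tpratio(\Pi) \geq \golden^{-1}$: I would invoke \Cref{thm:golden-is-achievable}, which guarantees that for any $X_1,\ldots,X_n$ there exists a threshold $\theta$ with $\expect_\pi [\expect X_{\pi,\theta}] \geq \golden^{-1} \cdot \expect X_*$, where the average is over a uniformly random $\pi \in \{\iota,\rho\}$. Since the average of the two quantities $\expect X_{\iota,\theta}$ and $\expect X_{\rho,\theta}$ is at least $\golden^{-1} \cdot \expect X_*$, at least one of the two must meet this bound, so the gambler can select whichever of $\iota, \rho$ achieves it.

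For the upper bound $\tpratio(\Pi) \leq \golden^{-1}$: I would apply \Cref{cor:3-indices} with the three indices $i=1$, $j=2$, $k=3$ (this uses the hypothesis $n \geq 3$). Under $\iota$ the positions are $\iota^{-1}(1)=1$, $\iota^{-1}(2)=2$, $\iota^{-1}(3)=3$, and under $\rho$ they are $\rho^{-1}(1)=n$, $\rho^{-1}(2)=n-1$, $\rho^{-1}(3)=n-2$. In both permutations, the position of index $2$ lies strictly between the positions of indices $1$ and $3$, so the hypothesis of \Cref{cor:3-indices} is satisfied and the conclusion $\tpratio(\Pi) \leq \golden^{-1}$ follows immediately.

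Combining the two inequalities yields $\tpratio(\Pi) = \golden^{-1}$. There is no substantial obstacle here: the theorem is essentially a bookkeeping step that packages the achievability result of \Cref{thm:golden-is-achievable} with the impossibility result of \Cref{cor:3-indices}, and the only mild thing to verify is that the forward and reverse orderings preserve the betweenness relation on any triple of indices, which is immediate from the fact that both orderings are monotone (one increasing, one decreasing) on $[n]$.
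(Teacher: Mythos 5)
Your proposal is correct and matches the paper's own proof: the lower bound comes from \Cref{thm:golden-is-achievable} via the observation that choosing the better of $\iota,\rho$ is at least as good as averaging over them, and the upper bound comes from \Cref{cor:3-indices} applied to any triple $i<j<k$, whose betweenness is preserved by both the forward and reverse orderings. No gaps.
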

\begin{proof}
  A threshold stopping rule that is allowed to select the better
  of $\iota$ and $\rho$ can do no worse than one which samples
  one of the two permutations uniformly at random. Therefore,
  \Cref{thm:golden-is-achievable} implies $\tpratio(\Pi) \geq \golden^{-1}$.
  On the other hand, any three distinct indices $i < j < k$
  in $[n]$ satisfy the condition in \Cref{cor:3-indices},
  implying that $\tpratio(\Pi) \leq \golden^{-1}$.
\end{proof}

\section{Beating the golden-ratio bound requires many permutations}
\label{sec:beating}

In this section we show that $\tpratio(\Pi) \leq \golden^{-1}$
whenever $|\Pi| < \sqrt{\log n}$ and that
$\tpratio(\Pi) \leq \golden^{-1} + O(\eps)$
whenever $\eps > 0$ and $|\Pi| < \log_{1/\eps}(n)$.
This was stated as \Cref{thm:golden-ratio} in the
introduction, and is proven by combining
\Cref{lem:golden-is-optimal,lem:centered} below.

\Cref{cor:3-indices} presented a necessary condition
for a permutation set $\Pi$ to satisfy
$\tpratio(\Pi) > \golden^{-1}$: for every
three indices $i,j,k$ there exists $\pi \in \Pi$
such that $\pi^{-1}(j)$ does not
lie between $\pi^{-1}(i)$ and $\pi^{-1}(k)$.
It turns out this condition guarantees
$|\Pi| > \log \log n$ \citep{nonunif}
but it does not imply any stronger lower bound
than that doubly-logarithmic one. To prove
stronger lower bounds on $|\Pi|$, we begin
by generalizing \Cref{cor:3-indices}.

\begin{defn} \label{def:centered}
  If $\Pi$ is a set of permutations of $[n]$,
  we say an index $j \in [n]$
  is {\em $\eps$-centered with
  respect to $\Pi$}
  if there exists a probability distribution $p$
  on $[n] \setminus \{j\}$ such that for every
  $\pi \in \Pi$ with inverse permutation $\sigma = \pi^{-1}$,
  the sets $\{ i \mid \sigma(i) < \sigma(j) \}$ and
  $\{ i \mid \sigma(i) > \sigma(j) \}$ both have
  measure greater than $\frac12 - \eps$ under $p$.
  In the special case $\eps=0$, we shall say that
  $j$ is {\em centered with respect to $\Pi$}.
\end{defn}

\begin{lem} \label{lem:golden-is-optimal}
  If $\Pi$ is a non-empty set of permutations of $[n]$
  and there exists an index $j \in [n]$ that is
  $\eps$-centered with respect to $\Pi$, then
  $\tpratio(\Pi) \leq \golden^{-1} + O(\eps)$.
\end{lem}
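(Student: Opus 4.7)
The approach is to generalize the hard instance of \Cref{lem:3-distribs} so that the role of its two ``flanking safe variables'' is distributed across the indices $i \neq j$ in proportion to $p_i$, where $p$ is the probability distribution witnessing $\eps$-centeredness of $j$. Concretely, fix small $\delta > 0$ and a large parameter $c > 0$ (both to be sent to their limits at the end). Let $X_j$ be exactly the ``big prize'' of \Cref{lem:3-distribs}: $X_j = (\sqrt{5}-1)/\delta$ with probability $\delta$, otherwise $X_j = 0$. For each $i \neq j$, let $X_i$ be independent with support $\{0\} \cup [1-\delta, 1]$ and CDF
\[
  \Pr(X_i \leq \theta) \;=\; e^{-c p_i (1-\theta)/\delta} \qquad \text{for } \theta \in [1-\delta, 1],
\]
so in particular $\Pr(X_i = 0) = e^{-c p_i}$. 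The crucial feature of this construction is the product identity $\prod_{i \in S} \Pr(X_i \le \theta) = e^{-s\sum_{i \in S} p_i}$, valid for every $S \subseteq [n] \setminus \{j\}$, where $s := c(1-\theta)/\delta$.

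For any threshold $\theta \in [1-\delta, 1]$ and permutation $\pi \in \Pi$, write $a_L(\pi) := \sum_{i : \pi^{-1}(i) < \pi^{-1}(j)} p_i$ and $a_R(\pi) := 1 - a_L(\pi)$. Decomposing the stopping event by whether the gambler stops strictly before $j$, at $j$, or strictly after $j$, the product identity combined with $a_L(\pi) + a_R(\pi) = 1$ yields
\[
  \expect X_\tau \;\leq\; 1 + e^{-s\, a_L(\pi)}(\sqrt{5}-1) - e^{-s}.
\]
The extreme thresholds $\theta < 1-\delta$ and $\theta > 1$ lead to $\expect X_\tau \leq 1 + o(1)$ and $\expect X_\tau \leq \sqrt{5}-1$ respectively, both strictly below $\golden^{-1}\sqrt{5}$, so the displayed bound is the binding one.

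The $\eps$-centered assumption gives $a_L(\pi) \geq \tfrac12 - \eps$ for every $\pi \in \Pi$, and a first-order optimization of the displayed bound over $s \geq 0$ with $a_L = \tfrac12 - \eps$ yields $\expect X_\tau \leq \golden^{-1}\sqrt{5} + O(\eps)$; the golden ratio emerges through the stationarity condition $e^{-s/2} = \golden^{-1}$ in the $\eps = 0$ case. A direct tail integration shows $\expect X_* \geq \sqrt{5} - O(\delta + e^{-c})$, so letting $\delta \to 0$ and $c \to \infty$ concludes $\tpratio(\Pi) \leq \golden^{-1} + O(\eps)$. The main obstacle, and the reason for the particular exponential form of the CDF above, is ensuring that $\prod_{i : \pi^{-1}(i) < \pi^{-1}(j)} \Pr(X_i \le \theta)$ factorizes as a clean exponential in the $p$-mass $a_L(\pi)$, rather than in the cruder combinatorial quantity $|L|$; with that identity in hand, the $\eps$-centered condition translates directly into a quantitative upper bound on $\expect X_\tau$.
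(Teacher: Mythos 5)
Your proposal is correct and is essentially the paper's own proof: the same big-prize variable $X_j$, the same exponential-in-$p_i$ distributions for $i \neq j$ giving the product identity in the $p$-mass, the same three-way decomposition yielding the bound $1 - e^{-s} + (\sqrt{5}-1)e^{-(1/2-\eps)s}$, and the same golden-ratio optimization via $q = e^{-s/2}$; the paper merely folds your extra parameter $c$ into $\delta$ by setting $1 - X_i = Y_i \wedge 1$ with $Y_i$ exponential of rate $p(i)/\delta$, avoiding the second limit. The only step you state loosely is the uniform-in-$s$ justification of the $O(\eps)$ error (the paper first shows via a derivative computation that the maximum occurs for $s \in [0,1]$ before replacing $e^{-(1/2-\eps)s}$ by $e^{-s/2} + O(\eps)$), but filling this in is routine.
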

The proof, which is deferred to \Cref{sec:beating-defer},
generalizes the proof of \Cref{cor:3-indices}.
The value of $X_j$ is defined to be
$(\sqrt{5}-1)/\delta$ with probability $\delta$,
else $X_j=0$, whereas the remaining distributions
are all tightly concentrated around 1. Threshold
stopping rules face a dilemma: if the threshold is
set high enough that there is a reasonable probability
of not stopping before $X_j$, then there
must be a reasonable probability that the gambler also
doesn't stop after $X_j$ and comes away empty-handed.
The prophet faces no such dilemma: she can pick $X_j$
when it is non-zero, and otherwise she can pick a value
that is nearly equal to 1.

\begin{lem} \label{lem:centered}
  If $\Pi$ is a set of fewer than $\sqrt{\log n}$
  permutations of $[n]$ then there exists an index
  $j$ that is centered with respect to $\Pi$.
  If $\Pi$ is a set of fewer than $\log_{1/\eps}(n)$
  permutations of $[n]$ for some $\eps > 0$
  then there exists an index
  $j$ that is $\eps$-centered with respect to $\Pi$.
\end{lem}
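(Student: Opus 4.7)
The plan is to recast ``centered'' as a convex-geometric condition and bound the number of \emph{non}-centered indices via a hyperplane-arrangement count. For each $j \in [n]$ and $i \neq j$, define the signature $v_i(j) \in \{-1,+1\}^m$ by $v_i(j)_\pi = +1$ iff $\pi^{-1}(i) < \pi^{-1}(j)$; note the key antisymmetry $v_i(j) = -v_j(i)$. Reading the definition of centered directly, $j$ is centered iff there exists a probability distribution $p$ with $\sum_{i \neq j} p_i v_i(j) = 0$, i.e., iff $0 \in \mathrm{conv}\{v_i(j) : i \neq j\}$. By the separating-hyperplane theorem, $j$ is \emph{not} centered iff there exists a ``witness'' $a_j \in \reals^m$ with $\langle a_j, v_i(j)\rangle > 0$ for every $i \neq j$. (For $\eps$-centered, the LP-duality calculation gives the same characterization with the stronger margin $\langle a_j, v_i(j)\rangle > 2\eps$ under $\|a_j\|_1 \leq 1$.)

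The crucial step is a pairwise incompatibility lemma: if $j_1 \neq j_2$ are both non-centered with witnesses $a_1, a_2$, then the sign patterns $(\sgn\langle a_\ell, v\rangle)_{v \in \{-1,+1\}^m}$ for $\ell = 1,2$ differ on at least one $v$. Indeed, take $v := v_{j_2}(j_1)$; the witness condition for $j_1$ at $i = j_2$ gives $\sgn\langle a_1, v\rangle = +1$, while the witness condition for $j_2$ at $i = j_1$, combined with $v_{j_1}(j_2) = -v$, gives $\sgn\langle a_2, v\rangle = -1$, a contradiction if the patterns were equal. Hence distinct non-centered indices correspond to distinct cells of the central hyperplane arrangement $\{H_v : v \in \{-1,+1\}^m\}$ in $\reals^m$, where $H_v = \{a : \langle a, v\rangle = 0\}$.

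Counting cells: since $H_v = H_{-v}$, this is an arrangement of $2^{m-1}$ hyperplanes in $\reals^m$, producing at most $\sum_{k=0}^m \binom{2^{m-1}}{k} \le 2^{m^2}$ open cells. Therefore at most $2^{m^2}$ indices are non-centered, so $n > 2^{m^2}$ (equivalently $|\Pi| < \sqrt{\log n}$) forces a centered $j$ to exist. For the $\eps$-centered statement, the same sign-incompatibility argument applies to the stronger witnesses; the sharper bound $|\Pi| < \log_{1/\eps}(n)$ requires replacing the cell count by a finer quantization: every witness $a$ with $\|a\|_1 \le 1$ and margin $> 2\eps$ can be rounded to a nearby point on a grid of spacing $\Theta(\eps/m)$ in $[-1,1]^m$ while preserving the strict sign conditions at every $v \in \{\pm 1\}^m$, and the incompatibility lemma shows that different non-$\eps$-centered indices get mapped to different grid points. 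The grid intersected with the $\ell_1$-ball of radius $1$ contains at most $(1/\eps)^m$ such points (after the $\ell_1$-ball's small volume $2^m/m!$ absorbs the $m^m$ overhead via Stirling), yielding $n \leq (1/\eps)^m$.

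The main obstacle is twofold: (i) verifying that the pairwise sign-incompatibility genuinely forces distinct cells (it does, because witnesses are \emph{strict} inequalities and sign cells are open), and (ii) for the $\eps$-centered case, calibrating the quantization scale $\delta \asymp \eps/m$ against the volume of the $\ell_1$-ball so that the Stirling cancellation yields the clean $(1/\eps)^m$ bound instead of the loose $(m/\eps)^m$ one would naively obtain.
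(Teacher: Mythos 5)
The first half of your proposal is essentially the paper's proof: the same signature vectors, the same convex-hull reformulation of ``centered,'' separating-hyperplane witnesses, and the antisymmetry $v_i(j)=-v_j(i)$ forcing the induced sign patterns on $\{\pm1\}^m$ to be pairwise distinct; your hyperplane-arrangement cell count is just a re-derivation of the Cover bound the paper cites, and the resulting $n\le 2^{m^2}$ is fine. One small repair is needed: a witness $a_j$ need not lie in an \emph{open} cell, since nothing prevents $\langle a_j,v\rangle=0$ for some $v\in\{\pm1\}^m$ that is not of the form $v_i(j)$; your parenthetical (``witnesses are strict inequalities and sign cells are open'') does not address this. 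It is fixed by a generic small perturbation of $a_j$, which preserves the finitely many strict inequalities $\langle a_j,v_i(j)\rangle>0$ while avoiding all the hyperplanes $H_v$ — or, as in the paper, by counting sign patterns (linear threshold functions) directly rather than open cells.

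The $\eps$-centered half has a genuine quantitative gap. Rounding to a cubic grid of spacing $\delta=c\eps/m$ requires $m\delta<4\eps$, i.e.\ $c<4$, for distinct non-$\eps$-centered indices to land on distinct grid points (and note that preserving signs at \emph{every} $v\in\{\pm1\}^m$ cannot be guaranteed, since there is no margin at the irrelevant $v$'s; only the $v_i(j)$'s matter). The number of such grid points meeting the $\ell_1$ ball is about $\frac{2^m}{m!}\,\delta^{-m}=\frac{2^m}{m!}\bigl(\tfrac{m}{c\eps}\bigr)^m\approx\bigl(\tfrac{2e}{c\eps}\bigr)^m$, and since $c<4$ the base constant $2e/c>e/2>1$: Stirling does \emph{not} cancel the $m^m$ down to the clean $(1/\eps)^m$; you only get $n\le(C/\eps)^m$ with $C\ge e/2$, hence $m\ge\log n/(\log(1/\eps)+\log C)$, which is weaker than the claimed threshold $\log_{1/\eps}(n)$. (That weaker form would still support \Cref{thm:golden-ratio} after rescaling $\eps$, but it does not prove the lemma as stated.) The loss is intrinsic to cubic quantization: a cube of side $\Theta(\eps/m)$ has exponentially smaller volume than an $\ell_1$ ball of radius $\Theta(\eps)$. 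The fix is to drop the grid entirely and use the packing argument the paper gives: from $\langle a_{j_1}-a_{j_2},\,v\rangle>4\eps$ and H\"older, the witnesses are pairwise at $\ell_1$ distance more than $4\eps$, so the $\ell_1$ balls of radius $2\eps$ around them are disjoint and contained in a ball of radius $1+2\eps<2$, whence $n\,(2\eps)^m<2^m$ and $n<(1/\eps)^m$ exactly.
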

\begin{proof}
  Let $\sigma_1,\sigma_2,\ldots,\sigma_m$ be an
  enumeration of the permutations whose inverse
  belongs to $\Pi$. For any pair of distinct
  indices $i \neq j$ in $[n]$ define a vector
  $\vctr{v}_{ij} \in \{ \pm 1 \}^m$ by specifying
  that for all $k \in [m]$
  \begin{equation} \label{eq:vij}
    (\vctr{v}_{ij})_k = \begin{cases}
      -1 & \mbox{if } \sigma_k(i) < \sigma_k(j) \\
      1 & \mbox{if } \sigma_k(i) > \sigma_k(j) .
    \end{cases}
  \end{equation}
  For any probability distribution $p$ on
  $[n] \setminus j$, if we use $q_k(p)$ to denote the
  measure of the set $\{ i \mid \sigma_k(i) < \sigma_k(j) \}$
  under $p$ and $\vctr{q}(p)$ to denote the vector $(q_1(p),\ldots,q_m(p))$,
  then we have
  \begin{equation} \label{eq:1-2q}
    \sum_{i \neq j} p(i) \vctr{v}_{ij} = \vctr{1} - 2 \vctr{q}(p)
  \end{equation}
  where $\vctr{1}$ denotes the vector $(1,1,\ldots,1)$.
  The criterion that $j$ is centered with respect to $\Pi$
  is equivalent to the existence of a distribution $p$
  such that $\vctr{q}(p) = \frac12 \cdot \vctr{1}$.
  Similarly, $j$ is $\eps$-centered with respect to $\Pi$
  if and only if there is a distribution $p$ such that
  $\vctr{q}(p) \in \left( \frac12 - \eps, \frac12 + \eps \right)^m$.
  Using~\eqref{eq:1-2q} we now see that
  {\em $j$ is centered with respect to $\Pi$
  if and only if $\vctr{0}$~is a convex combination
  of the vectors in the set $V_j = \{ \vctr{v}_{ij} \mid i \neq j \}$.
  Similarly, $j$ is $\eps$-centered with respect to $\Pi$
  if and only if the convex hull of $V_j$ intersects the
  hypercube $(-2 \eps, 2 \eps)^m$.}

  To finish proving the first part of the lemma,
  we must show that if $m < \sqrt{\log n}$,
  then for some index $j$, $\vctr{0}$~is a convex
  combination of the vectors in $V_j$.
  Contrapositively, we will assume that for every $j$,
  $\vctr{0}$~is not in the convex hull of
  $V_j$, and we will deduce from this assumption that
  $m^2 \geq \log n$.
  By the separating hyperplane theorem, our assumption
  that $\vctr{0}$ is not in the convex hull of $V_j$
  implies there is a vector $\vctr{w}_j$ such that
  $\langle \vctr{w}_j, \vctr{v}_{ij} \rangle > 0$
  for all $i \neq j$.
  Note that $\vctr{v}_{ij} = - \vctr{v}_{ji}$,
  so for all $i \neq j$,
  \begin{equation} \label{eq:ltf-differ}
    \langle \vctr{w}_j, \vctr{v}_{ij} \rangle > 0 >
    \langle \vctr{w}_i, \vctr{v}_{ij} \rangle ,
  \end{equation}
  where the second inequality follows because
  $\langle \vctr{w}_i, \vctr{v}_{ji} \rangle > 0$
  by our assumption on $\vctr{w}_i$.
  Now consider the linear threshold functions
  defined by $f_i(\vctr{x}) = \sgn(\langle \vctr{w}_i, \vctr{x} \rangle)$
  for each $i \in [n]$.
  Equation~\eqref{eq:ltf-differ} says that
  $f_i(\vctr{v}_{ij}) \neq f_j(\vctr{v}_{ij})$.
  Recalling that $\vctr{v}_{ij} \in \{ \pm 1 \}^m$,
  this means the restrictions of
  $f_1,\ldots,f_n$ to $\{ \pm 1 \}^m$
  are pairwise distinct.
  There are fewer than $2^{m^2}$ distinct linear
  threshold functions on $\{ \pm 1 \}^m$ \citep{cover1965geometrical},
  hence $m^2 \geq \log n$.

  To finish proving the second part of the lemma,
  we must show that if $m < \log_{1/\eps}(n)$,
  then for some index $j$, the convex hull of $V_j$
  and the hypercube $(-2 \eps, 2 \eps)^m$ intersect.
  Contrapositively, we will assume that for every
  $j$, the minimum $\infty$-norm of the vectors in the
  convex hull of $V_j$ is at least $2 \eps$. From this
  assumption we will deduce that $(1/\eps)^m \geq n$.
  Minimizing the $\infty$-norm of vectors in the
  convex hull of $V_j$ is equivalent to solving the
  following linear program, whose dual is presented
  alongside it.
  \begin{equation*}
    \begin{lparray}
      \min & r \\
      \mbox{s.t.}
      &  r - \sum_{i \neq j} v_{ij,k} p_i \geq 0
      & \forall k \in [m] \\
      &  r + \sum_{i \neq j} v_{ij,k} p_i \geq 0
      & \forall k \in [m] \\
      &  \sum_{i \neq j} p_i = 1 & \\
      &  p_i \geq 0 & \forall i \in [n] \setminus j
    \end{lparray}
    \quad
    \begin{lparray}
      \max & z \\
      \mbox{s.t.}
      & z + \sum_{k=1}^m v_{ij,k} (y_k - x_k) \leq 0
      & \forall i \in [n] \setminus j \\
      & \sum_{k=1}^m (y_k + x_k) = 1 & \\
      & x_k, y_k \geq 0 & \forall k \in [m]
    \end{lparray}
  \end{equation*}
  Our assumption is that for each $j$, the optimum
  of the primal LP is at least $2 \eps$, which
  means that the optimum of the dual LP is also
  at least $2 \eps$. Let $\vctr{x}, \vctr{y}, z$
  denote a feasible dual solution with $z \geq 2 \eps$,
  and let $\vctr{w}_j = \vctr{x} - \vctr{y}$.
  The first dual constraint, combined with the
  inequality $z \geq 2 \eps$, implies
  $\langle \vctr{w}_j, \vctr{v}_{ij} \rangle \geq 2 \eps$
  for all $i \in [n] \setminus j$.
  Using the fact that $x_k,y_k \geq 0$ implies
  $x_k + y_k \geq |x_k - y_k|$, we see that the
  second dual constraint implies $\|\vctr{w}_j\|_1 \leq 1$.
  Thus, there exist vectors $\vctr{w}_1,\ldots,\vctr{w}_n$
  in the $L_1$ unit ball,
  such that for all $i \neq j$ in $[n]$,
  $\langle \vctr{w}_j, \vctr{v}_{ij} \rangle
  \geq 2 \eps$.
  Recalling that $\vctr{v}_{ji} = - \vctr{v}_{ij}$,
  we may combine the inequalities
  \[
    2 \eps \leq \langle \vctr{w}_j, \vctr{v}_{ij} \rangle, \qquad
    2 \eps \leq \langle \vctr{w}_i, \vctr{v}_{ji} \rangle
  \]
  to obtain
  \begin{equation} \label{eq:centered.1}
    4 \eps \, \leq \,
    \langle \vctr{w}_j - \vctr{w}_i, \vctr{v}_{ij} \rangle
    \, \leq \,
    \| \vctr{w}_j - \vctr{w}_i \|_1 \cdot \| \vctr{v}_{ij} \|_{\infty}
    \, = \,
    \| \vctr{w}_j - \vctr{w}_i \|_1
  \end{equation}
  Summarizing, the $L_1$ unit ball contains $n$ vectors
  $\vctr{w}_1,\ldots,\vctr{w}_n$, and the pairwise
  distances between these vectors (in $L_1$) are at
  least $4 \eps$. The $L_1$ balls of radius $2 \eps$
  centered at these vectors are pairwise disjoint,
  and all of them are contained in the ball of radius
  $1 + \eps$ centered at $\vctr{0}$, so the combined volume of
  the $n$ balls of radius $2\eps$ must not exceed the
  volume of the radius-$(1+\eps)$ ball. Since $1+\eps < 2$,
  this implies
  $(2 \eps)^m \cdot n < 2^m$,
  hence $n < (1/\eps)^m$ as claimed.
\end{proof}

\section{Achieving the optimal threshold prophet ratio}
\label{sec:pairwise}

Recall that when one is free to order the elements
arbitrarily, threshold stopping
rules satisfy a prophet inequality with factor
$1 - \frac1e$ but (asymptotically) no greater,
i.e.~$\tpratio(S_n) = 1 - \frac1e + o(1)$.
In this section we construct a small
set of permutations that achieves
this bound, and an even smaller set
that comes arbitrarily close. The
constructions make use of {\em pairwise
independence} and {\em almost pairwise independence};
see \Cref{defn:pairwise} below.

The fact that $\tpratio(S_n) = 1 - \frac1e + o(1)$
is implicit in \citep{blind-strat,delegated-search};
we prove the following
in \Cref{sec:pairwise-defer}
for the sake of making our exposition self-contained.
\begin{prop} \label{prop:1-1/e-optimal}
  As $n \to \infty$, the threshold prophet
  ratio $\tpratio(S_n)$ converges to $1 - \frac1e$
  from above.
\end{prop}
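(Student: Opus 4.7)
The plan is to prove the two bounds $1 - \tfrac{1}{e} \leq \tpratio(S_n) \leq 1 - \tfrac{1}{e} + o(1)$, with the lower bound strict for finite $n$ and nondegenerate distributions, which gives the ``from above'' clause.

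For the lower bound, I would adapt the averaging argument from the proof of \Cref{thm:golden-is-achievable} to uniformly random permutations $\pi \in S_n$. Given distributions $X_1, \ldots, X_n$, use the lexicographic tie-breaking convention from \Cref{sec:prelim} to pick $\theta$ with $p := \Pr(X_* > \theta) = 1 - \tfrac{1}{e}$, so that $\prod_i (1-p_i) = \tfrac{1}{e}$ for $p_i := \Pr(X_i > \theta)$. Realize the uniformly random $\pi$ by giving each item $i$ an independent arrival time $U_i \sim U[0,1]$ (ranks of $U_i$ then yield a uniform permutation), so conditional on $U_i = s$ the probability of reaching item $i$ equals $\prod_{j \neq i}(1-sp_j)$. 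Integrating,
\begin{equation*}
  \expect_\pi[c_i(\pi)] \;=\; \int_0^1 \prod_{j \neq i}(1-sp_j)\, ds \;\geq\; \int_0^1 \prod_j (1-sp_j)\, ds \;\geq\; \int_0^1 (1-p)^s\, ds \;=\; \tfrac{-p}{\log(1-p)} \;=\; 1 - \tfrac{1}{e},
\end{equation*}
where the central inequality $(1-sp_j) \geq (1-p_j)^s$ for $s \in [0,1]$ follows from convexity of $s \mapsto (1-p_j)^s$ (which agrees with the linear function $1-sp_j$ at $s=0$ and $s=1$, and lies strictly below it in between whenever $p_j \in (0,1)$). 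Combining with the Samuel-Cahn identity $\expect X_{\pi,\theta} = p\theta + \sum_i c_i(\pi)\,\expect[(X_i-\theta)^+]$ and the Markov-type bound $\expect X_* \leq \theta + \sum_i \expect[(X_i-\theta)^+]$, and plugging in $p = 1-\tfrac{1}{e}$, gives $\expect_\pi \expect X_{\pi,\theta} \geq (1-\tfrac{1}{e})\expect X_*$; hence some specific $\pi \in S_n$ witnesses $\tpratio(S_n) \geq 1 - \tfrac{1}{e}$, with strict inequality when the distributions are nondegenerate.

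For the upper bound, I would exhibit a sequence of hard instances showing $\tpratio(S_n) \to 1 - \tfrac{1}{e}$. The idea is to construct independent distributions with multiple value scales tuned so that threshold rules face a Samuel-Cahn dilemma: a threshold high enough to benefit from a ``rare but large'' scale incurs a failure probability approaching $1/e$, while a threshold low enough to catch smaller values forfeits the larger scales' contribution. Selecting the scales to saturate the lower-bound calculation in the limit forces all threshold rules, under any $\pi \in S_n$, to achieve ratio at most $1 - \tfrac{1}{e} + o(1)$.

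The main obstacle lies in the upper bound. Simple instance families (two-valued distributions, iid with a single positive support point, deterministic instances) all allow the free ordering to match the prophet exactly, yielding ratio $1$ — so the bad instance must exploit multiple independent value scales together with an asymmetry preventing any permutation from closing the $1/e$ gap. Moreover, the ratio analysis must be carried out uniformly over all $n!$ orderings, in contrast to the lower bound where random ordering immediately suffices.
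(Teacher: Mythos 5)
Your lower bound is sound, and it takes a somewhat different route from the paper: the paper simply invokes \Cref{thm:pseudorandom} (whose engine is the pairwise-independence bound of \Cref{thm:pairwise} together with \Cref{lem:1/e}), whereas you re-derive the bound for the uniform distribution on $S_n$ via independent arrival times $U_i \sim U[0,1]$ and the pointwise estimate $1 - s p_j \geq (1-p_j)^s$, giving $\expect_\pi[c_i(\pi)] \geq \int_0^1 e^{-s}\,ds = 1 - \frac1e$ when $\theta$ is chosen with $\prod_j(1-p_j) = \frac1e$. That is the standard prophet-secretary averaging argument and is perfectly valid here. (A small caveat: strictness of the inequality on each nondegenerate instance does not by itself make the infimum $\tpratio(S_n)$ strictly exceed $1-\frac1e$; but the paper likewise only establishes the weak inequality, so this is not a substantive defect.)

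The genuine gap is the upper bound, which is the real content of the proposition beyond what \Cref{thm:pseudorandom} already gives: you describe an intention to build a ``multi-scale'' hard instance but supply neither the instance nor its analysis, and your closing remarks point in the wrong direction. You insist the instance needs ``an asymmetry preventing any permutation from closing the gap'' and that the analysis ``must be carried out uniformly over all $n!$ orderings.'' The paper does the opposite: it takes $X_1,\ldots,X_n$ i.i.d., each a mixture of a uniform distribution on $[1, 1+\frac1H]$ with weight $1 - \frac{1}{(e-2)nH}$ and a uniform distribution on $[H+1, H+1+\frac1H]$ with weight $\frac{1}{(e-2)nH}$, so by symmetry the choice of permutation is irrelevant and only the threshold must be optimized. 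Writing $q = 1 - F(\theta)$ and $q \approx r/n$, one gets $\expect X_\tau \leq (1-(1-q)^n)\bigl(\tfrac{p}{q}H + 1 + \tfrac1H\bigr)$, while $\expect X_* \approx \tfrac{e-1}{e-2}$, so the ratio is approximately $(1-e^{-r})\bigl(\tfrac{e-2}{e-1} + \tfrac{1}{(e-1)r}\bigr)$, which is maximized at $r=1$ with value exactly $1-\frac1e$. Your dismissal of i.i.d. instances covered only the single-scale case; it is precisely the two-scale i.i.d.\ instance (a threshold low enough to exploit the rare high band fails with probability about $e^{-r}$, while a threshold confined to the high band forfeits the base value) that defeats all threshold rules under every ordering, and this construction and its optimization over $\theta$ are what your proposal is missing.
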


To design sets of permutations that achieve,
or approach, the $1 - \frac1e$ bound, we use
(almost) pairwise independent permutations,
a notion we now define.

\begin{defn} \label{defn:pairwise}
  A distribution over permutations $\sigma \in S_n$
  is {\em pairwise independent} if for every
  pair of distinct indices $i \neq j$ in $[n]$,
  the pair $(\sigma(i),\sigma(j))$ is distributed uniformly over
  $\{(a,b) \in [n] \times [n] \mid a \neq b \}$.
  It is {\em $(\eps,\delta)$-almost pairwise independent}
  if for every pair of distinct indices $i \neq j$,
  the distribution of the pair
  $\left( \left\lceil \frac{\sigma(i)}{\eps n} \right\rceil,
  \left\lceil \frac{\sigma(j)}{\eps n} \right\rceil \right)$
  is $\delta$-close, in total variation distance,
  to the uniform distribution on
  $[\frac{1}{\eps}] \times [\frac{1}{\eps}]$,
  where $[\frac{1}{\eps}]$ denotes the set
  $\{1,2,\ldots,\left\lceil \frac{1}{\eps} \right\rceil \}$.
\end{defn}

\begin{lem} \label{lem:pairwise-existence}
  For prime $n$, there exists a set $\Pi$
  of $n(n-1)$ permutations such that the
  uniform distribution over $\Pi$ is
  pairwise independent. For any
  $\eps,\delta > 0$ such that $1/\eps$
  is an integer, if $n$ is an integer multiple
  of $1/\eps$ and $\eps n \ge 2/\delta$,
  then there exists
  a set $\Pi$ of $O((\frac{1}{\delta \eps})^{2} \log n)$
  permutations such that the uniform distribution
  over $\Pi$ is $(\eps,\delta)$-almost pairwise
  independent.
\end{lem}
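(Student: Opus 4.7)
The plan is to exhibit the classical affine family. Identify $[n]$ with $\mathbb{F}_n = \mathbb{Z}/n\mathbb{Z}$, and for each pair $(a,b) \in \mathbb{F}_n^* \times \mathbb{F}_n$ define $\sigma_{a,b}(x) = ax + b \pmod n$. This yields $|\Pi| = n(n-1)$ permutations. For any distinct $i, j \in [n]$ and any distinct $a', b' \in [n]$, the two-equation system $\sigma_{a,b}(i) = a'$, $\sigma_{a,b}(j) = b'$ reduces to $a = (a'-b')(i-j)^{-1}$ and $b = a' - ai$ in $\mathbb{F}_n$; primality of $n$ together with $i \neq j$ and $a' \neq b'$ guarantees a unique solution with $a \in \mathbb{F}_n^*$. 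Hence $\Pr_{\sigma \sim \Pi}[\sigma(i)=a',\, \sigma(j)=b'] = 1/[n(n-1)]$, which is precisely pairwise independence.

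\textbf{Part 2.} The plan is a probabilistic construction: draw $\pi_1, \dots, \pi_m$ independently and uniformly from $S_n$ with $m = C\,\log(n)/(\eps\delta)^2$ for a sufficiently large constant $C$, and show that $\Pi = \{\pi_1, \dots, \pi_m\}$ is $(\eps,\delta)$-almost pairwise independent with positive probability. Fix distinct $i, j \in [n]$, set $N = 1/\eps$, and for a uniform random permutation $\pi$ let $B(\pi) = (\lceil \pi(i)/(\eps n)\rceil, \lceil \pi(j)/(\eps n)\rceil) \in [N]^2$. A direct count shows that the distribution $P_{ij}$ of $B(\pi)$ differs from the uniform distribution on $[N]^2$ by at most $1/(n-1)$ in total variation, which is at most $\delta/2$ under the hypothesis $\eps n \geq 2/\delta$. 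It then suffices to show that, with probability at least $1 - 1/n^3$, the empirical distribution $\hat P_m$ of $B(\pi_1), \dots, B(\pi_m)$ lies within $\delta/2$ of $P_{ij}$ in total variation; a union bound over the $\binom{n}{2}$ pairs $(i,j)$ closes the argument.

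\textbf{Concentration and main obstacle.} For the concentration step I would combine the standard bound $\expect[\mathsf{TV}(\hat P_m, P_{ij})] \leq \tfrac{1}{2}\sqrt{N^2/m}$ (the expected $L^1$-error of an empirical distribution on $N^2 = 1/\eps^2$ atoms, obtained via Jensen and Cauchy--Schwarz on the per-atom variances) with McDiarmid's inequality, noting that $\mathsf{TV}(\hat P_m, P_{ij})$, viewed as a function of $\pi_1, \dots, \pi_m$, has bounded differences of magnitude $1/m$. With $m = C\,\log(n)/(\eps\delta)^2$, both the expectation term $O(1/(\eps\sqrt{m}))$ and the McDiarmid deviation $O(\sqrt{\log(n)/m})$ fall below $\delta/4$ for $C$ large enough, which gives $\Pr[\mathsf{TV}(\hat P_m, P_{ij}) > \delta/2] < 1/n^3$ as required. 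The one subtlety, and the motivation for the hypothesis $\eps n \geq 2/\delta$, is the non-uniformity of $P_{ij}$ induced by the constraint $\pi(i) \neq \pi(j)$; verifying that this $O(1/n)$ per-bucket bias remains within the $\delta/2$ slack after summing its absolute value over all $N^2$ buckets is the calculation I would carry out most carefully.
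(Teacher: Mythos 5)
Your Part 1 is the paper's proof verbatim: the affine maps $x \mapsto ax+b \pmod n$ over $\mathbb{F}_n$, with uniqueness of $(a,b)$ from the two linear congruences. Your Part 2 follows the same skeleton as the paper (sample $m = O((\eps\delta)^{-2}\log n)$ uniform permutations, fix a pair $i \neq j$, control the bucketed pair distribution, union bound over pairs), but the concentration machinery differs. The paper works bucket by bucket: for each $(u,v)$ it applies Bernstein's inequality to show the empirical frequency of $\{(b(\sigma(i)),b(\sigma(j)))=(u,v)\}$ is at least $\eps^2(1-\delta)$, union bounds over all $n(n-1)/\eps^2$ tuples $(i,j,u,v)$, and then bounds the total variation distance by summing only the positive deficiencies against the uniform measure; the hypothesis $\eps n \ge 2/\delta$ enters through the lower bound $\Pr[(b(\sigma(i)),b(\sigma(j)))=(u,v)] \ge \eps^2(1-\delta/2)$ for a truly uniform $\sigma$. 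You instead control the whole empirical distribution at once via $\expect[\mathsf{TV}(\hat P_m,P_{ij})] \le \tfrac12\sqrt{N^2/m}$ plus McDiarmid with bounded differences $1/m$, and then pass from $P_{ij}$ to uniform by a triangle inequality, which is where your $\eps n \ge 2/\delta$ is spent. Both routes give the stated $O((\eps\delta)^{-2}\log n)$; your quantitative checks go through (with $m = C\log n/(\eps\delta)^2$, the mean term needs $C \gtrsim 4$ and the McDiarmid tail $C \gtrsim 24$), and the exact computation you flag comes out to $\mathsf{TV}(P_{ij},U) = \tfrac{1-\eps}{n-1}$, which is indeed at most $\tfrac{1}{\eps n} \le \delta/2$ since $\eps(1-\eps)n \le n-1$ (and is $0$ when $\eps = 1$), so the careful step you defer does close. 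One cosmetic point common to both arguments: the random draws form a multiset, and the almost-pairwise-independence is established for the empirical (multiset) distribution rather than the uniform distribution on the underlying set; the paper states this explicitly, and you should too.
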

The proof is deferred to \Cref{sec:pairwise-defer}.
The first part is proven using the permutations
$\sigma(k) = ak  + b \pmod{n}$ for all
$a \in [n-1]$ and $b \in [n]$. The second
part is proven using the probabilistic method
to show that a random $\Pi \subset S_n$ of
the given cardinality has positive probability
of being $(\eps,\delta)$-almost pairwise
independent. Explicit constructions using
$\epsilon$-biased sets \citep{naor93,tashma17}
can achieve
$|\Pi| = O((\frac{1}{\delta \eps})^{2+o(1)} \log n)$.

\begin{thm} \label{thm:pairwise}
  Suppose $\sigma$ is a random permutation of $[n]$
  and $\pi = \sigma^{-1}$. If $\sigma$
  is drawn from a pairwise independent
  distribution then there exists a
  threshold $\theta$ such that
  \begin{equation} \label{eq:1-1/e}
    \expect_{\pi} \left[ \expect X_{\pi,\theta} \right]
    \geq
    \left( 1 - \tfrac1e \right) \cdot \expect X_* .
  \end{equation}
  If $\sigma$ is drawn from an $(\eps,\eps^2)$-almost
  pairwise independent distribution then there exists a
  threshold $\theta$ such that
  \begin{equation} \label{eq:1-1/e+eps}
    \expect_{\pi} \left[ \expect X_{\pi,\theta} \right]
    \geq
    \left( 1 - \tfrac1e - O(\eps) \right) \cdot \expect X_* .
  \end{equation}
\end{thm}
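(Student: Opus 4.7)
The plan is to decompose the expected gambler value via Eq.~\eqref{eq:gia.2}:
\[
\expect_{\pi}\!\left[\expect X_{\pi,\theta}\right] \;=\; p\theta \;+\; \sum_{i=1}^n c_i \cdot \expect\!\left[(X_i - \theta)^+\right],
\]
where $q_j := \Pr(X_j < \theta)$, $p := 1 - \prod_j q_j$ depends only on $\theta$ (not on $\sigma$), and $c_i := \expect_\sigma\bigl[\prod_{j:\,\sigma(j)<\sigma(i)} q_j\bigr]$. First I would choose $\theta$ so that $p = 1 - 1/e$, which is possible under the tie-breaking convention of \Cref{sec:prelim}. Combined with Eq.~\eqref{eq:gia.1}, namely $\expect X_* \le \theta + \sum_i \expect[(X_i - \theta)^+]$, both claims~\eqref{eq:1-1/e}~and~\eqref{eq:1-1/e+eps} reduce to showing $c_i \ge 1 - 1/e$ (respectively, $c_i \ge 1 - 1/e - O(\eps)$) for every $i$.

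For the pairwise independent case, I would condition on $\sigma(i) = k$, which is uniform on $[n]$, and use pairwise independence to conclude that $\Pr(\sigma(j) < k \mid \sigma(i) = k) = (k-1)/(n-1)$ for each $j \ne i$. Writing the inner expectation as $\expect\!\bigl[\exp\!\bigl(\sum_{j \ne i}\ones[\sigma(j) < k]\log q_j\bigr) \,\big|\, \sigma(i) = k\bigr]$ and applying Jensen's inequality to $\exp$, we get
\[
c_i \;\ge\; \frac{1}{n}\sum_{k=1}^n Q^{(k-1)/(n-1)}, \qquad Q \,:=\, \prod_{j \ne i} q_j \;=\; \frac{1/e}{q_i} \;\in\; [1/e, 1].
\]
Since $f(s) = Q^s$ is convex, a short computation comparing this discrete average to the trapezoid rule shows the right-hand side is at least $\int_0^1 Q^s\,ds = (1-Q)/(-\log Q)$, which is an increasing function of $Q$ equal to $1 - 1/e$ at $Q = 1/e$. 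Hence $c_i \ge 1 - 1/e$.

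For the $(\eps, \eps^2)$-almost pairwise independent case, I would coarsen the positions by setting $b_j := \lceil \sigma(j)/(\eps n)\rceil \in [\lceil 1/\eps\rceil]$. Because $\sigma(j) < \sigma(i)$ implies $b_j \le b_i$ and each $q_j \le 1$, we have $c_i \ge \expect\bigl[\prod_{j \ne i} q_j^{\ones[b_j \le b_i]}\bigr]$. The $\eps^2$-closeness in TV distance of $(b_i, b_j)$ to uniform on $[\lceil 1/\eps\rceil]^2$ implies $\Pr(b_j \le B \mid b_i = B) \le B\eps + O(\eps)$. Conditioning on $b_i = B$, applying Jensen as above, and averaging over $B$ (whose marginal is within $O(\eps^2)$ of $\eps$) yields $c_i \ge (1 - O(\eps)) \cdot \eps \sum_{B=1}^{\lceil 1/\eps\rceil} Q^{B\eps}$; this is a right-Riemann approximation to $\int_0^1 Q^u\,du = (1-Q)/(-\log Q)$ that agrees with it up to $O(\eps(1-Q))$, and combined with $Q \ge 1/e$ gives $c_i \ge 1 - 1/e - O(\eps)$.

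The hardest part will be carefully controlling the three sources of $O(\eps)$ error in the almost pairwise case---the block-discretization overshoot in replacing $\ones[\sigma(j) < \sigma(i)]$ by $\ones[b_j \le b_i]$, the $\eps^2$ total-variation slack in the pair marginals, and the right-Riemann gap---and verifying that they combine additively (rather than with a $\log(1/\eps)$ blow-up) uniformly in the input distributions and in $Q \in [1/e, 1]$.
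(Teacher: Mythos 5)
Your proposal is correct and follows essentially the same route as the paper's proof: the same decomposition $\expect X_\tau = p\theta + \sum_i c_i\,\expect[(X_i-\theta)^+]$, the same choice of threshold making $\prod_j q_j = \tfrac1e$ (justified by the tie-breaking convention), the same Jensen/weighted-AM--GM step reducing $c_i$ to $\sum_k \Pr(\sigma(i)=k)\prod_{j\neq i} q_j^{\Pr(\sigma(j)<k\mid\sigma(i)=k)}$, and the same bucketing in the almost-pairwise case; your three error sources enter exactly as in Eq.~\eqref{eq:pw.10} (two multiplicative $1-O(\eps)$ factors plus one additive Riemann gap of at most $\eps(1-Q)$), so there is no $\log(1/\eps)$ blow-up to worry about. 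The one genuine difference is the elementary inequality finishing the argument: the paper invokes \Cref{lem:1/e}, proved by a somewhat laborious monotonicity-of-$f(k)$ calculation, and even remarks that no comparably simple integral-comparison argument was known to the authors when the hypothesis is $r^k=\tfrac1e$; your trapezoid-rule sketch does supply such an argument. Concretely, for the convex function $f(s)=Q^s$ one has
\[
\frac1n\sum_{m=0}^{n-1} f\!\left(\tfrac{m}{n-1}\right)
\;=\;
\frac{n-1}{n}\,T \;+\; \frac{f(0)+f(1)}{2n}
\;\ge\;
\int_0^1 f(s)\,ds ,
\]
where $T$ is the composite trapezoid rule on $n-1$ equal subintervals (which overestimates the integral by convexity) and $\tfrac12\left(f(0)+f(1)\right)\ge\int_0^1 f$ by Hermite--Hadamard; since $\int_0^1 Q^s\,ds=(1-Q)/(-\log Q)$ is increasing in $Q$ and equals $1-\tfrac1e$ at $Q=\tfrac1e$, this gives $c_i\ge 1-\tfrac1e$ directly (and, in the almost-pairwise case, the right-Riemann version loses only $\eps(1-Q)$). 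Thus your finishing sub-argument is a simpler substitute for the paper's \Cref{lem:1/e}; the only technicality worth a sentence in a write-up is handling $q_j=0$ in the Jensen step (e.g.\ by continuity or the convention $0^0=1$), an issue the paper's AM--GM formulation shares.
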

\begin{proof}
  For a given threshold $\theta$, let $p = \Pr(X_* \ge \theta)$.
  We have
  \begin{equation} \label{eq:pw.1}
    \expect X_* \le \theta + \expect[(X_*-\theta)^+] \le
      \theta + \sum_{i=1}^n \expect[(X_i - \theta)^+] .
  \end{equation}
  For the random variable $X_\tau = X_{\pi,\theta}$, we have
  \begin{equation} \label{eq:pw.2}
    \expect X_\tau = p \theta + \expect[(X_\tau - \theta)^+]
      = p \theta + \sum_{i=1}^n c_i \expect[(X_i - \theta)^+]
  \end{equation}
  where
  \begin{equation} \label{eq:pw.3}
    c_i = \sum_{k=1}^n \Pr(\pi(k) = i) \cdot
      \prod_{\ell=1}^{k-1} \Pr(X_{\pi(\ell)} < \theta \mid \pi(k) = i)
  \end{equation}
  denotes the probability that no element is
  selected before the stopping rule observes
  $X_i$. We can bound the product
  occurring in the formula for $c_i$ using the
  arithmetic mean-geometric mean inequality,
  similarly to the proof of \Cref{thm:golden-is-achievable}.
  Let $q_j = \Pr(X_j < \theta)$ for
  each $j \in [n]$. For any set $S \subset [n]$,
  let $p_{k,i}(S) = \Pr(\pi([k-1]) = S \mid \pi(k) = i)$
  denote the conditional probability that $S$ is
  exactly equal to the set of elements observed before
  $X_i$, given that $\pi(k)=i$.
  \begin{equation}
    \label{eq:pw.4}
      \prod_{\ell=1}^{k-1}
      \Pr(X_{\pi(\ell)} < \theta \mid \pi(k) = i)
      =
      \sum_{S \subset [n]}
      p_{k,i}(S)
      \prod_{j \in S} q_j
      \; \geq \;
      \prod_{S \subset [n]}
      \left( \prod_{j \in S} q_j \right)^{p_{k,i}(S)}
       =
      \prod_{j \in [n] \setminus \{i\}} q_j^{\sum_{S \subset [n], j \in S} p_{k,i}(S)}
      .
  \end{equation}
  The exponent $\sum_{S \subset [n], j \in S} p_{k,i}(S)$
  on the right side is equal to $\Pr( \sigma(j) < k \mid \sigma(i) = k )$.
  Hence,
  \begin{equation} \label{eq:pw.5}
    c_i \geq \sum_{k \in [n]} \Pr(\pi(k) = i)
      \prod_{j \neq i} q_j^{\Pr( \sigma(j) < k \mid \sigma(i) = k )} .
  \end{equation}
  If $\sigma$
  is pairwise independent then $\Pr(\pi(k)=i) = \frac1n$
  and for all $j$,
  $\Pr(\sigma(j) < k \mid \sigma(i) = k) = \frac{k-1}{n-1}$.
  Hence, if we let $q = \prod_{j=1}^n q_j$, then
  \begin{equation} \label{eq:pw.6}
    c_i \geq \frac1n \sum_{k=1}^n \left( \prod_{j \neq i} q_j \right)^{\frac{k-1}{n-1}}
        \geq \frac1n \left( 1 + q^{\frac{1}{n-1}} + \cdots + q^{\frac{n-2}{n-1}} + q \right) .
  \end{equation}
  If we set $\theta$ so that $q = \frac1e$, then
  Lemma~\ref{lem:1/e} in \Cref{sec:pairwise-defer}
  shows that the right side of~\eqref{eq:pw.6}
  is greater than $1-\frac1e$. Also, note that
  $$ p = \Pr(X_* \geq \theta) = 1 - \Pr(X_* < \theta)
     = 1 - \prod_{j=1}^n q_j = 1 - q  = 1 - \tfrac1e . $$
  Having shown that $p = 1 - \tfrac1e$ and that
  $c_i > 1 - \tfrac1e$ for all $i$, we may substitute
  these bounds into~\eqref{eq:pw.2} and conclude that
  \begin{equation} \label{eq:pw.7}
    \expect X_\tau \geq \left(1-\tfrac1e\right) \theta + \left(1 - \tfrac1e\right)
    \sum_{i=1}^n \expect[(X_i - \theta)^+] \geq \left(1 - \tfrac1e\right) \expect X_* .
  \end{equation}

  Now we turn to the case that the distribution of
  $\sigma$ is $(\eps,\delta)$-almost pairwise independent
  for $\delta = \eps^2$.
  In that case, we group the time steps $k \in [n]$ into
  ``buckets'' of $\eps n$ consecutive steps; the bucket
  containing the time step when $X_i$ is observed
  is numbered
  $b(i) = \left\lceil \frac{\sigma(i)}{\eps n} \right\rceil$.
  The counterpart of \eqref{eq:pw.3} is the following
  inequality:
  \begin{equation} \label{eq:pw.8}
    c_i \geq \sum_{u = 1}^{1/\eps} \Pr(b(i) = u) \cdot
      \prod_{\ell=1}^{\eps n u}
      \Pr(X_{\pi(\ell)} < \theta \mid b(i) = u)
  \end{equation}
  The inequality is valid because the left side is the
  probability that no value greater than $\theta$ is
  observed before $X_i$ is observed, and the right side
  is the probability no value greater than $\theta$ is
  observed in buckets $1,2,\ldots,b(i)$.
  For any set $S \subset [n]$,
  let $\tilde{p}_{u,i}(S) = \Pr(\pi([\eps n u]) = S \mid b(i) = u)$
  denote the conditional probability that $S$ is
  exactly equal to the set of elements observed in
  buckets $1,2,\ldots,u$, given that $b(i)=u$.
  Analogously to \eqref{eq:pw.4}, we can use
  the AM-GM inequality to derive
  \begin{equation}
    \label{eq:pw.9}
      \prod_{\ell=1}^{\eps n u}
      \Pr(X_{\pi(\ell)} < \theta \mid b(i)=u)
      \; \geq \;
      \prod_{j \in [n]} q_j^{\sum_{S \subset [n], j \in S} \tilde{p}_{u,i}(S)}
      .
  \end{equation}
  The exponent of $q_j$ on the right side is equal
  to $\Pr(b(j) \le u \mid b(i) = u)$. If the
  distribution of $\sigma$ is $(\eps,\delta)$-almost
  pairwise independent, this probability is
  at most $\eps u + \delta$ and $\Pr(b(i) = u) \geq \eps-\delta$.
  Therefore,
  \begin{equation} \label{eq:pw.10}
    c_i \geq (\eps - \delta) \cdot \sum_{u=1}^{1/\eps} q^{\eps u + \delta}
        = \left(1 - \tfrac{\delta}{\eps} \right)
        \cdot q^{\eps+\delta} \cdot
        \left[ \eps \left(
          1 + q^{\eps} + q^{2 \eps} + \cdots + q^{1 - \eps}
        \right) \right] .
  \end{equation}
  If $q = \frac1e$ and $\delta = \eps^2$ then the factors
  $1 - \frac{\delta}{\eps}$ and $q^{\eps+\delta}$ are
  both $1 - O(\eps)$ and the factor
  $\eps \left(
    1 + q^{\eps} + q^{2 \eps} + \cdots + q^{1 - \eps}
  \right)$ is at least $1 - \frac1e$,
  again by \Cref{lem:1/e}. Hence, $c_i \geq 1 - \frac1e - O(\eps)$.
  As before, $p = 1-q = 1 - \frac1e$, and the
  lemma follows by substituting these bounds for $p$ and $c_i$
  into~\eqref{eq:pw.2} and comparing with~\eqref{eq:pw.1}.
\end{proof}

We now restate and prove \Cref{thm:pseudorandom}.
\begin{thm}
  For every $n \in \mathbb{N}$ and
  $\eps>0$, there is a set $\Pi$ consisting
  of $O(\poly(\eps^{-1}) \cdot \log n)$ permutations
  such that $\tpratio(\Pi) > 1 - \frac1e - \eps$.
  There is also a set $\Pi$ consisting of
  $O(n^2)$ permutations such that $\tpratio(\Pi) \geq
  1 - \frac1e$.
\end{thm}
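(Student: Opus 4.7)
The plan is to combine \Cref{lem:pairwise-existence} with \Cref{thm:pairwise}, handling the arithmetic side conditions (primality for the first, divisibility and size lower bound for the second) by zero-padding the input random variables.

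For the $O(n^2)$ bound, I would invoke Bertrand's postulate to find a prime $N$ with $n \leq N \leq 2n$ and pad $X_1,\ldots,X_n$ with $N-n$ identically-zero random variables. \Cref{lem:pairwise-existence} then furnishes a pairwise independent $\Pi_N \subseteq S_N$ of size $N(N-1) = O(n^2)$, and the first part of \Cref{thm:pairwise} yields a threshold $\theta$, which we may take strictly positive (else the claim is trivial), such that the expected threshold-rule payoff on the padded input, averaged over a uniformly random $\pi_N \in \Pi_N$, is at least $(1 - 1/e) \cdot \expect X_*$. By a standard averaging argument some $\pi_N \in \Pi_N$ attains this bound deterministically. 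Since $\theta > 0$, the threshold rule skips every dummy coordinate, so defining $\pi \in S_n$ as the restriction of $\pi_N$ to the positions where it takes values in $[n]$ produces an $n$-permutation whose threshold-rule payoff equals that of $\pi_N$ on the padded input. The image $\Pi \subseteq S_n$ of $\Pi_N$ under this restriction satisfies $|\Pi| \leq |\Pi_N| = O(n^2)$ and $\tpratio(\Pi) \geq 1 - 1/e$.

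For the logarithmic-size bound, I would first fix $\eps' = \eps/c$ for a constant $c$ large enough that the $O(\eps')$ slack in the second part of \Cref{thm:pairwise} is at most $\eps$, and (by rounding $c$) arrange that $1/\eps'$ is an integer. Pad to the smallest integer $N \geq \max(n, 2/\eps'^3)$ divisible by $1/\eps'$; this gives $N = O(n + \eps^{-3})$ and $\log N = O(\log n)$ for $n \geq 2$. The second part of \Cref{lem:pairwise-existence} applied with parameters $(\eps', \eps'^2)$ then produces an $(\eps', \eps'^2)$-almost pairwise independent $\Pi_N \subseteq S_N$ of size $O(\eps'^{-6} \log N) = O(\poly(\eps^{-1}) \cdot \log n)$. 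The second part of \Cref{thm:pairwise} combined with averaging yields a $\pi_N \in \Pi_N$ whose threshold-rule payoff on the padded input is at least $(1 - 1/e - \eps) \cdot \expect X_*$, and restricting to $[n]$ as before produces the desired $\Pi \subseteq S_n$.

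The only subtlety worth flagging is that the restriction map $\Pi_N \to \Pi$ need not preserve (almost) pairwise independence, but this is immaterial: the prophet inequality has already been established for $\Pi_N$ acting on the padded sequence, and restriction merely reinterprets that guarantee on the original sequence, exploiting the fact that a positive-threshold rule is indifferent to where the zero-valued coordinates sit in the order. I do not anticipate any genuine obstacle beyond this bookkeeping.
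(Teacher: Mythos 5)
Your proposal follows essentially the same route as the paper: pad up to a prime (resp.\ a sufficiently large multiple of $1/\eps'$) using Bertrand's postulate, apply \Cref{lem:pairwise-existence} and \Cref{thm:pairwise} with an averaging step, and project the permutations back down to $S_n$ --- the projection you describe is exactly the paper's Padding Lemma (\Cref{lem:padding}). The only wrinkle is dismissing the $\theta=0$ case as ``trivial'': under the paper's lexicographic tie-breaking convention the chosen threshold can genuinely be $0$ (with a nontrivial tie-break component), but the argument still goes through because the restricted rule's payoff dominates the padded rule's payoff pointwise (skipping zero-valued dummies can only help), which is precisely how the paper closes this case.
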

\begin{proof}
    If $n$ is prime, \Cref{lem:pairwise-existence} combined
    with \Cref{thm:pairwise} show that there exists a
    set $\Pi$ of $n (n-1)$ permutations such that
    $\tpratio(\Pi) \geq 1 - \frac1e$. If $n$ is composite,
    we make use of a ``padding lemma''
    (\Cref{lem:padding} in \Cref{sec:pairwise-defer})
    that says that for any $N \geq n$,
    a set of $m$ permutations $\Pi_N \subseteq S_N$
    can be transformed into a set of at most $m$ permutations,
    $\Pi_n \subseteq S_n$,
    such that $\tpratio(\Pi_n) \geq \tpratio(\Pi_N)$.
    Taking $N$ to be a prime between $n$ and $2n$
    \citep{cheby}, the padding lemma implies there
    is a set $\Pi_n$ of fewer than $4n^2$ permutations
    satisfying $\tpratio(\Pi) \geq 1 - \frac1e$.
    Similarly, for $\eps>0$,
    take $k \in \naturals$
    such that $\eps/2 < \frac1k \leq \eps$.
    If $2 \leq n \leq 2k^3$ then
    $n^2 \leq 4 k^6 \leq 256 \eps^{-6}$, and we have already
    shown the existence of $\Pi \subseteq S_n$ with
    $|\Pi| = O(n^2) = O(\eps^{-6})$
    and $\tpratio(\Pi) \geq 1 - \frac1e$.
    Otherwise, $n > 2 k^3$ so there is a multiple of $k$
    between $n$ and $2n$. Denoting this number by $N$,
    and observing that the constraint $\eps' N \ge 2 / \delta'$
    is satisfied when $\eps' = 1/k$ and $\delta' = (\eps')^2$,
    we apply \Cref{lem:pairwise-existence} and \Cref{thm:pairwise}
    to deduce the existence of a set $\Pi_N$ of $O(k^6 \log N)$
    permutations of $[N]$ such that $\tpratio(\Pi_N) \geq 1 - \frac1e - \eps$,
    then we use the padding lemma to deduce the existence
    of $\Pi_n \subseteq S_n$ such that $|\Pi_n| = O(k^6 \log N)
    = O(\eps^{-6} \log n)$ and $\tpratio(\Pi_n) \geq 1- \frac1e - \eps$.
\end{proof}

\appendix

\section{Deferred proofs from \Cref{sec:beating}}
\label{sec:beating-defer}

In this section we restate and prove
\Cref{lem:golden-is-optimal}.

\begin{lem}
  If $\Pi$ is a non-empty set of permutations of $[n]$
  and there exists an index $j \in [n]$ that is
  $\eps$-centered with respect to $\Pi$, then
  $\tpratio(\Pi) \leq \golden^{-1} + O(\eps)$.
\end{lem}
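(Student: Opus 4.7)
The plan is to generalize the three-variable hard instance from the proof of \Cref{cor:3-indices}. I would let $p$ be a distribution witnessing that $j$ is $\eps$-centered and construct the random variables as follows. Fix small parameters $\delta, \delta' > 0$ and a parameter $r \in (0, 1]$ (to be chosen). Let $X_j = (\sqrt{5}-1)/\delta$ with probability $\delta$ and $0$ otherwise, exactly as in \Cref{lem:3-distribs}; for each $i \neq j$, let $X_i = 1$ with probability $r\,p(i)$ and $X_i = 1 - \delta'$ otherwise. The idea is that the $X_i$'s collectively play the role of the ``safe'' variables $X_1$ and $X_3$: for a threshold $\theta \in (1-\delta', 1)$, each $X_i$ triggers a stop with probability $r\,p(i)$, so the $\eps$-centering condition ensures these triggering probabilities accumulate roughly evenly on either side of $X_j$ in every $\pi \in \Pi$, matching the symmetric placement of $X_1, X_3$ around $X_2$ in the original argument.

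The prophet's strategy of picking $X_j$ when nonzero, else any $X_i$, gives $\expect X_* \to \sqrt{5}$ as $\delta, \delta' \to 0$. For the gambler, an analysis by cases on the threshold $\theta$ reduces the problem to $\theta \in (1-\delta', 1)$: if $\theta \le 1 - \delta'$ the gambler necessarily stops at $X_{\pi(1)} \ne X_j$ (since $\eps$-centering forces $\pi^{-1}(j) \ge 2$ for every $\pi \in \Pi$), obtaining ratio $\le 1/\sqrt{5}$; and if $\theta \ge 1$ the gambler only ever stops at $X_j$, obtaining ratio $(\sqrt{5}-1)/\sqrt{5}$; both strictly less than $\golden^{-1}$. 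In the interesting regime, writing $\sigma = \pi^{-1}$, $P = \prod_{i \neq j}(1 - r\,p(i))$, and $P_<(\pi) = \prod_{i:\sigma(i) < \sigma(j)}(1 - r\,p(i))$, separately evaluating the probabilities of the three possible outcomes ($X_\tau = 1$, $X_\tau = (\sqrt{5}-1)/\delta$, or no stop) gives
\[
  \expect X_\tau \;=\; 1 \,-\, (1-\delta)\,P \,+\, (\sqrt{5}-1-\delta)\,P_<(\pi).
\]

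Let $\alpha_\pi = \sum_{i:\sigma(i)<\sigma(j)} p(i)$; by $\eps$-centering, $\alpha_\pi \in (1/2 - \eps,\,1/2 + \eps)$. Applying $1 - x \le e^{-x}$ termwise to $P_<(\pi)$ gives $P_<(\pi) \le e^{-r(1/2-\eps)}$, while applying $1 - x \ge e^{-x(1+O(x))}$ in $P$ gives $P \ge e^{-r}(1 - o(1))$ as $\max_i p(i) \to 0$. Substituting these bounds and writing $s = e^{-r/2}$, the right side becomes at most $1 - s^2 + (\sqrt{5}-1)\,s + O(\eps)$, which is maximized over $s \in (0,1)$ at $s = (\sqrt{5}-1)/2 = \golden^{-1}$, where its value equals $\golden^{-1}\sqrt{5} = (5-\sqrt{5})/2$. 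Dividing by $\expect X_* \to \sqrt{5}$ yields $\tpratio(\Pi) \le \golden^{-1} + O(\eps)$.

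The main technical obstacle is that the estimate $P \approx e^{-r}$ relies on the witness distribution $p$ having no heavy atoms. When $p$ has an atom $p(i_0) = q$ close to the maximal value $1/2 + \eps$ allowed by $\eps$-centering, its contribution $(1 - rq)$ to $P_<(\pi)$ depends discretely on which side of $j$ the index $i_0$ lies under $\pi$, and the clean exponential bound fails. I would handle this either by splitting the factor $(1 - rq)$ out of the product --- treating the cases ``$i_0$ lies left of $j$ in $\pi$'' and ``$i_0$ lies right of $j$ in $\pi$'' separately and taking the worst of the two resulting optimizations --- or by replacing $p$ with a slightly smoothed distribution $p'$ obtained by convex combination with a suitable spread distribution, sacrificing only $O(\eps)$ in the centering parameter and absorbing that loss into the final $O(\eps)$ slack.
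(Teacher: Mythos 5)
Your construction has a genuine gap, and it is precisely the issue the paper's tie-breaking convention is designed to expose: your ``safe'' variables are two-point distributions, and under the paper's definition a threshold stopping rule is \emph{unconstrained} (and may behave position-dependently) on observations exactly equal to the threshold. Take $\theta = 1-\delta'$ in your instance. Every $X_i$ with $i \neq j$ is then either equal to $\theta$ (stopping optional) or strictly above it (stopping forced), so the rule ``decline all optional stops before $j$'s position, accept the first observation after $j$'s position'' is a legitimate $\pi$-adapted threshold rule. It reaches $X_j$ with probability $\prod_{i:\sigma(i)<\sigma(j)}(1-r\,p(i)) \ge 1 - r(\tfrac12+\eps) \ge \tfrac12-\eps$, collects $\sqrt{5}-1$ in expectation from the jackpot, and when the jackpot is absent it still collects at least $1-\delta'$ from the very next observation. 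Its expected value is therefore at least $1 + (\tfrac12-\eps)(\sqrt5-1) - O(\delta+\delta') \approx 1.618$, against $\expect X_* \le \sqrt5 \approx 2.236$, i.e.\ a ratio of about $0.72 > \golden^{-1}$, for \emph{every} choice of $r \in (0,1]$. So the instance does not witness the claimed upper bound; the error enters in your case analysis, where ``$\theta \le 1-\delta'$ forces stopping at $X_{\pi(1)}$'' fails at $\theta = 1-\delta'$ (a similar dilution freedom exists at the atom at $1$). This is not a removable technicality: the paper's footnote and \Cref{sec:prelim} stress that the lax tie-breaking convention is exactly what makes ratios above $\tfrac12$ possible, so any matching hard instance must be robust to it.

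The second difficulty you flagged (heavy atoms of the witness distribution $p$) is also real, but neither proposed fix is carried out, and the smoothing one does not work as described: mixing $p$ with a spread-out distribution leaves any heavy atom essentially intact, so the bound $P \ge e^{-r}(1-o(1))$ still fails. The paper's proof of \Cref{lem:golden-is-optimal} avoids both problems simultaneously by making the non-$j$ variables \emph{continuous} with exponentially shaped upper tails near $1$: it sets $1 - X_i = Y_i \wedge 1$ with $Y_i$ exponential of rate $p(i)/\delta$, so that every threshold $\theta = 1-\delta s$ satisfies $\Pr(X_i < \theta) = e^{-s\,p(i)}$ \emph{exactly}. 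The before-$j$ and overall survival probabilities are then exactly $\exp(-s\sum_{i \in I_0} p(i))$ and $\exp(-s)$, which the $\eps$-centering hypothesis controls directly with no AM--GM step, no atom issues, and no tie-breaking freedom; one then bounds $1 - e^{-s} + (\sqrt5-1)e^{-(\frac12-\eps)s}$ over all $s \ge 0$ by $\sqrt5\,\golden^{-1} + O(\eps)$. Note that merely smearing your atom at $1-\delta'$ over a common tiny interval would not rescue your approach: a threshold inside that interval accepts each low value with the \emph{same} probability, so the relevant survival products are governed by the cardinalities of $I_0$ and $I_1$ rather than their $p$-measures, which centering does not control. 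Arranging the per-index acceptance probability to scale with $p(i)$ at every threshold is exactly what the exponential construction accomplishes.
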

\begin{proof}
  Suppose $j$ is $\eps$-centered with respect to $\Pi$,
  and let $p$ be a probability distribution on
  $[n] \setminus \{j\}$ such that for
  every permutation $\sigma$ whose inverse
  belongs to $\Pi$, the sets
  $\{ i \mid \sigma(i) < \sigma(j)\}$ and
  $\{ i \mid \sigma(i) > \sigma(j)\}$ have
  measure at least $\frac12 - \eps$ under $p$.
  Define the distributions of $X_1,X_2,\ldots,X_n$
  as follows. For a small positive number $\delta$ to be
  determined later, the value of $X_j$ is
  $(\sqrt{5}-1)/\delta$ with probability $\delta$,
  and otherwise $X_j = 0$. For every
  $i \in [n] \setminus \{j\}$,
  $X_i$ has cumulative distribution function
  $F_i$ satisfying
  \[
    F_i(1-t) = \Pr(X_i \le 1-t) = \begin{cases}
      1 & \mbox{if } t < 0 \\
      \exp \left( -\frac{p(i) t}{\delta} \right) & \mbox{if } 0 \le t \le 1 \\
      0 & \mbox{if } t > 1 .
    \end{cases}
  \]
  In other words, $1 - X_i = Y_i \wedge 1$ where
  $Y_i$ is exponentially distributed with rate parameter
  $\frac{p(i)}{\delta}$ and the notation $a \wedge b$ denotes
  the minimum of $a$ and $b$.

  First, observe that $X_*$ is equal to $(\sqrt{5} - 1)/\delta$
  with probability $\delta$, and otherwise
  \begin{align*}
    X_* = \max_{i \neq j} \{X_i\}
        = 1 - \min_{i \neq j} \{ Y_i \wedge 1 \}
        \ge 1 -  \bigwedge_{i \neq j} Y_i  .
  \end{align*}
  The minimum of independent exponential random variables
  with rates $r_1,\ldots,r_n$ is exponential with rate
  $r_1 + \cdots + r_n$. Hence, $ \bigwedge_{i \neq j} Y_i$
  is exponentially distributed with rate
  $\frac{1}{\delta} \sum_{i \neq j} p(i) = \frac{1}{\delta}$,
  and its expected value is $\delta$. Consequently,
  \[
    \expect [ X_* \mid X_j = 0 ] \ge
    \expect \left[ 1 - \bigwedge_{i \neq j} Y_i \right] = 1 - \delta
  \]
  and the prophet's unconditional expected value satisfies
  \begin{equation} \label{eq:grio.1}
    \expect X_* = \delta \cdot \frac{\sqrt{5}-1}{\delta} \, + \,
      (1 - \delta) \cdot \expect [ X_* \mid X_j = 0 ]
      \ge \sqrt{5} - 1 + (1 - \delta)^2 > \sqrt{5} - 2 \delta.
  \end{equation}
  Now we turn to analyzing the expected value obtained by a
  threshold stopping rule $\tau$ with threshold $\theta$,
  assuming $\tau$ is $\pi$-adapted for some $\pi \in \Pi$
  with inverse permutation $\sigma = \pi^{-1}$.
  If $\theta < 0$ then $\tau = 1$ and $X_\tau = X_{\pi(1)} \le 1$
  so $\expect X_\tau \le 1$. If $\theta > 1$ then
  $\tau \in \{j, \bot\}$ and $X_\tau = X_j$, so
  $\expect X_\tau = \expect X_j = \sqrt{5} - 1$.
  In both of these cases, $\expect X_\tau < \golden^{-1}
  \cdot \expect X_*$ provided $\delta < 0.07$.
  The remaining case to consider is when
  $0 \le \theta \le 1$. In that case let
  $I_0 = \{ i \mid \sigma(i) < \sigma(j)\}$
  denote the set indexing the values appearing
  before $X_j$ in the sequence
  $X_{\pi(1)},\ldots,X_{\pi(n)}$,
  and let $I_1 = \{i \mid \sigma(i) > \sigma(j) \}$
  denote the set indexing the values appearing
  after $X_j$ in that sequence.
  If $q_0, q_1$ denote the probabilities
  of the events $\max \{ X_i \mid i \in I_0 \} < \theta$
  and $\max \{ X_i \mid i \in I_1 \} < \theta$,
  respectively, then the gambler's expected value
  satisfies
  \begin{equation} \label{eq:grio.2}
      \expect X_\tau \; \le \;
      (1 - q_0) \cdot 1 \, + \,
      q_0 \cdot \delta \cdot \frac{\sqrt{5}-1}{\delta} \, + \,
      q_0 \cdot (1-\delta) \cdot (1-q_1) \cdot 1
      \; < \; 1 - q_0 q_1 + (\sqrt{5} - 1) q_0 ,
  \end{equation}
  where the first inequality is justified because
  $X_\tau \le 1$ if $\tau = i$ for any $i \in [n] \setminus j$
  and $X_\tau = 0$ if $\tau = \bot$.
  Writing $\theta = 1 - \delta s$, we have
  $\Pr(X_i < \theta) = \exp(- s p(i))$ for $i \neq j$
  and therefore
  \begin{align*}
    q_0 & = \prod_{i \in I_0} \Pr(X_i < \theta) =
      \exp \left( - s \sum_{i \in I_0} p(i) \right) <
      \exp \left( - (\tfrac12 - \eps) s \right) \\
    q_0 q_1 &= \prod_{i \in I_0 \cup I_1} \Pr(X_i < \theta) =
      \exp \left( - s \sum_{i \neq j} p(i) \right) = \exp(-s) .
  \end{align*}
  Substituting these into~\eqref{eq:grio.2} we find that
  \begin{equation} \label{eq:grio.3}
    \expect X_\tau < 1 - \exp(-s) + (\sqrt{5}-1) \exp \left( - (\tfrac12 - \eps) s \right) .
  \end{equation}
  To complete the proof we must bound the right
  side of~\eqref{eq:grio.3} above by $\golden^{-1} + O(\eps)$
  when $s \ge 0$. The
  derivative of the right side is
  \[
    \exp(-s) - ( \sqrt{5}-1 ) ( \tfrac12 - \eps )
    \exp \left( - (\tfrac12 - \eps) s \right)
      =
    \left[ \exp \left(-(\tfrac12 + \eps) s \right)
    - \golden^{-1} + (\sqrt{5} - 1) \eps \right]
    \cdot \exp \left( - (\tfrac12 - \eps) s \right)
  \]
  which is negative for all $s \ge 1$, provided that
  $\eps < \frac{1}{110}$. Assume henceforth that
  $\eps < \frac{1}{110}$, as otherwise the inequality
  asserted by the lemma, $\tpratio(\Pi) \leq \golden^{-1} + O(\eps)$,
  is trivially satisfied due to the $O(\eps)$ term.
  The derivative calculation above shows
  that the right side of~\eqref{eq:grio.3}
  is a decreasing function of $s \ge 1$, implying that its
  maximum value on the interval $s \in [0,\infty)$ is
  attained when $s \in [0,1]$. For $s$ in this range,
  if we let $q = \exp( - \frac12 s)$, then
  \[
    \exp( - (\tfrac12 - \eps) s) =
    q \cdot e^{\eps s} \le q \cdot e^{\eps} \le q + O(\eps) .
  \]
  Hence we can rewrite~\eqref{eq:grio.3} as
  \begin{equation} \label{eq:grio.4}
    \expect X_\tau < 1 - q^2 + (\sqrt{5}-1) q + O(\eps) .
  \end{equation}
  The right side of~\eqref{eq:grio.4} is a quadratic
  function of $q$ that is maximized when $q = \golden^{-1}$
  and $1 - q^2 = q = \golden^{-1}$. Therefore,
  \begin{equation} \label{eq:grio.5}
    \expect X_\tau < \sqrt{5} \golden^{-1} + O(\eps) .
  \end{equation}
  Combining \eqref{eq:grio.1} with \eqref{eq:grio.5},
  the bound
  $\expect X_\tau \leq (\golden^{-1} + O(\eps)) \cdot \expect X_*$
  follows, which confirms that
  $\tpratio(\Pi) \leq \golden^{-1} + O(\eps)$.
\end{proof}

\section{Deferred proofs from \Cref{sec:pairwise}}
\label{sec:pairwise-defer}

This section contains proofs of propositions and
lemmas that were mentioned in \Cref{sec:pairwise}
whose proofs were deferred.

\begin{prop}
  As $n \to \infty$, the threshold prophet
  ratio $\tpratio(S_n)$ converges to $1 - \frac1e$
  from above.
\end{prop}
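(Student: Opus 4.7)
The proposition asserts two inequalities: $\tpratio(S_n) \ge 1-\tfrac1e$ (which makes precise the ``from above'' claim) and the asymptotic upper bound $\tpratio(S_n) \le 1-\tfrac1e + o(1)$. My plan is to address these separately, since only the second requires new work.

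For the lower bound, I would apply \Cref{thm:pairwise} directly to the uniform distribution on $S_n$. That distribution trivially satisfies \Cref{defn:pairwise}: for any distinct $i \ne j$, a uniformly random $\sigma \in S_n$ has $(\sigma(i),\sigma(j))$ uniformly distributed over ordered pairs of distinct elements of $[n]$. \Cref{thm:pairwise} then gives a threshold $\theta$ such that the average of $\expect X_{\pi,\theta}$ over a uniformly random $\pi \in S_n$ is at least $(1-\tfrac1e)\expect X_*$; in particular some $\pi \in S_n$ attains this bound, yielding $\tpratio(S_n) \ge 1-\tfrac1e$. (The same conclusion is also immediate from monotonicity $\tpratio(\Pi) \le \tpratio(S_n)$ combined with \Cref{thm:pseudorandom}.)

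For the upper bound, I would exhibit a sequence of hard instances: $n$-tuples of independent random variables $X_1,\ldots,X_n$ for which every threshold $\theta$ and every permutation $\pi \in S_n$ yield $\expect X_{\pi,\theta} \le (1 - \tfrac1e + o(1))\expect X_*$. Such instances are implicit in \citep{blind-strat,delegated-search}; adapting them, I would take independent non-identical two-point variables of the form $X_i = v_i$ with probability $p_i$ and $0$ otherwise, with the $p_i$'s nearly equal to $1/n$ and the $v_i$'s chosen so that the prophet's expectation is essentially $\sum_i v_i p_i$ while no single threshold $\theta$ can extract more than a fraction $(1-q)/\ln(1/q)$ of it, where $q = \prod_j \Pr(X_j < \theta)$.

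The main obstacle is proving the upper bound uniformly over all $\pi \in S_n$, rather than merely on average over a random $\pi$. The plan is to design the instance to be essentially symmetric across the indices $i$, so that for any fixed threshold $\theta$ the set $\{i : v_i > \theta\}$ of ``relevant'' items and the total mass $\prod_j \Pr(X_j < \theta)$ are both invariant (up to $o(1)$) under reordering; this reduces the analysis to the tight case of the AM--GM step in the proof of \Cref{thm:pairwise}. Since $(1-q)/\ln(1/q)$ is minimized at $q = 1/e$ with value exactly $1-\tfrac1e$, this simultaneously pinpoints the worst-case instance and the asymptotic constant, completing the upper bound.
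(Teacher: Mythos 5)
Your lower-bound half is fine and matches the paper (which simply cites \Cref{thm:pseudorandom}). The gap is in the upper bound: the instance family you propose cannot work. With independent two-point variables $X_i \in \{0, v_i\}$, if the $v_i$ are distinct the gambler may pick the permutation that sorts indices in decreasing order of $v_i$ and use threshold $\theta = 0$; the first value exceeding the threshold is then exactly the maximum realized value, so $\expect X_{\pi,\theta} = \expect X_*$ and the ratio is $1$. If instead you enforce your ``essentially symmetric'' requirement by making the $v_i$ (nearly) equal, the instance is a single two-point distribution, and a threshold just below the common value again captures the maximum whenever it is nonzero, so the ratio is $1-o(1)$. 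Hardness for free-order threshold rules requires each variable to carry at least two distinct nonzero value levels, so that stopping at a modest value forfeits a later shot at a much larger one. That is exactly what the paper's construction does: i.i.d.\ draws from a mixture of a low band near $1$ (mass $1-\tfrac{1}{(e-2)nH}$) and a rare high band near $H$ (mass $\tfrac{1}{(e-2)nH}$). The i.i.d.\ choice also eliminates your ``uniformly over all $\pi$'' obstacle for free, since reordering identically distributed variables changes nothing --- this is the one-line observation the paper uses, rather than an approximate-symmetry argument.

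The closing quantitative claim is also incorrect: $(1-q)/\ln(1/q)$ is increasing in $q$ on $(0,1)$ and tends to $0$ as $q \to 0$, so it is not minimized at $q = \tfrac1e$, and in any case the hard-instance analysis is not ``the gambler gets at most $(1-q)/\ln(1/q)$ of the prophet.'' One must account for the value the gambler secures by stopping (roughly $\theta$ per stop) against the chance of ending empty-handed, optimize over the threshold, and show the best tradeoff is still only $1-\tfrac1e$. In the paper this is where the calibration constant $e-2$ enters: with per-item exceedance probability $q$ and $r \approx qn$, the gambler's ratio is at most about $\left(1-e^{-r}\right)\left(\tfrac{e-2}{e-1}+\tfrac{1}{(e-1)r}\right)$, which is maximized at $r=1$ with value exactly $1-\tfrac1e$. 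That calibration, and a construction forcing the low-versus-high dilemma, are the substantive content missing from your sketch.
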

\begin{proof}
  By \Cref{thm:pseudorandom}, $\tpratio \ge 1 - \frac1e$,
  so we need only show that $\tpratio \le 1 - \frac1e + o(1)$
  as $n \to \infty$. To this end, let $H \gg 1$ be an arbitrarily
  large number and consider a sequence of
  random variables $X_1,\ldots,X_n$ drawn i.i.d.~from a
  distribution whose cumulative distribution function
  $F$ is given by
  \begin{equation} \label{eq:cdf}
    F(x) = \begin{cases}
      0 & \mbox{if } x < 1 \\
      \left( H - \tfrac{1}{(e-2)n} \right) (x-1)
      & \mbox{if } 1 \leq x \leq 1 + \tfrac1H \\
      1 - \frac{1}{(e-2)nH}
      & \mbox{if } 1 + \tfrac1H < x < H+1 \\
      1 - \tfrac{1}{(e-2)nH} + \tfrac{x - H - 1}{(e-2)n}
      & \mbox{if } H+1  \leq x \leq H+1 + \tfrac1H \\
      1 & \mbox{if } x > H+1+\tfrac1H .
    \end{cases}
  \end{equation}
  In words, each $X_i$ is sampled from a
  mixture of two uniform distributions, on
  the intervals $[1,1+\frac{1}{H}]$ and
  $[H+1,H+1+\frac{1}{H}]$, with the mixture
  weights being $1 - \frac{1}{(e-2)nH}$ and
  $\frac{1}{(e-2)nH}$, respectively.

  Since the variables $X_1,\ldots,X_n$ are identically
  distributed, reordering them has no effect on the
  performance of stopping rules, so we merely need to
  show that if $\tau$ is a threshold stopping rule
  adapted to the sequence $X_1,\ldots,X_n$ then
  $\expect X_\tau \leq (1 - \frac1e + o(1)) \cdot \expect X_*$,
  where the $o(1)$ term vanishes as $n \to \infty$.

  Let $p = \frac{1}{(e-2)nH}$ denote the
  probability that a sample $X_i$ exceeds $H$.
  The prophet's expected value satisfies
  \begin{equation} \label{eq:p1eo.1}
      \expect X_* = 1 + (1 - (1-p)^n) H + O(\tfrac1H)
         \ge \frac{e-1}{e-2} - O(\tfrac1H)
  \end{equation}
  where the second inequality follows from:
  \begin{align*}
    (1-p)^n & < 1 - np + \binom{n}{2} p^2 <
    1 - \frac{1}{(e-2)H} + \frac{1}{2 (e-2)^2 H^2} \\
    (1 - (1-p)^n) H & > \frac{1}{e-2} - \frac{1}{2 (e-2)^2 H} .
  \end{align*}
  Now consider a threshold stopping rule $\tau$ that
  sets a threshold $\theta$ such that $F(\theta) = 1 - q$;
  in other words, any given element of the sequence
  $X_1,\ldots,X_n$ has probability $q$ of exceeding the
  threshold. Then, $\tau = \bot$
  with probability $(1-q)^n$. Furthermore,
  conditional on $\tau < \bot$, the events
  $X_\tau \in [H + 1, H + 1 + \frac1H]$
  and $X_\tau \in [1, 1 + \frac1H]$ have
  conditional probabilities $\min \{ p/q, 1 \}$
  and $1 - \min \{p/q, 1\},$ respectively.
  Thus,
  \[
    \expect X_\tau \le \left( 1 - (1-q)^n \right) \cdot
    \left( \tfrac{p}{q} H + 1 + \tfrac1H \right) .
  \]
  Let $r = q/n$.
  Ignoring $O(\tfrac1H)$ terms that vanish as $H \to \infty$
  and $O(1/n)$ terms that vanish as $n \to \infty$ we have
  \begin{align}
  \nonumber
    \expect X_{\tau} & \approx \left(1 - e^{-r} \right) \cdot
    \left(1 + \tfrac{1}{(e-2)r} \right) \\
  \label{eq:1-1/e.0}
    \frac{\expect X_\tau}{\expect X_*} & \approx \left(1 - e^{-r} \right) \cdot \left( \frac{e-2}{e-1} + \frac{1}{(e-1)r} \right).
  \end{align}
  The right side of~\eqref{eq:1-1/e.0} is maximized at $r=1$, where it equals $1 - \frac1e$.
\end{proof}

\begin{lem}
  For prime $n$, there exists a set $\Pi$
  of $n(n-1)$ permutations such that the
  uniform distribution over $\Pi$ is
  pairwise independent. For any
  $\eps,\delta > 0$ such that $1/\eps$
  is an integer, if $n$ is an integer multiple
  of $1/\eps$ and $\eps n \ge 2/\delta$,
  then there exists
  a set $\Pi$ of $O((\frac{1}{\delta \eps})^{2} \log n)$
  permutations such that the uniform distribution
  over $\Pi$ is $(\eps,\delta)$-almost pairwise
  independent.
\end{lem}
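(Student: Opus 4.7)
For the first part, my plan is to identify $[n]$ with $\mathbb{Z}/n\mathbb{Z}$ and take $\Pi$ to consist of the $n(n-1)$ affine maps $\sigma_{a,b}(k) = ak + b \pmod{n}$ for $a \in \{1,\ldots,n-1\}$ and $b \in \{0,\ldots,n-1\}$. Primality of $n$ ensures every $a$ is invertible, so each $\sigma_{a,b}$ is a bijection. To verify pairwise independence, I would fix distinct indices $i,j$ and any target pair $(c,d)$ with $c\neq d$, then observe that the system $ai+b\equiv c$, $aj+b\equiv d$ has a unique solution with $a = (c-d)(i-j)^{-1}\not\equiv 0$ and $b = c-ai$. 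Hence each off-diagonal pair $(c,d)$ has exactly one preimage among the $n(n-1)$ maps, which is exactly the required uniformity.

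For the second part, I would apply the probabilistic method: sample $\sigma_1,\ldots,\sigma_m$ independently and uniformly from $S_n$ with $m = C\log(n)/(\delta\eps)^2$ for a sufficiently large absolute constant $C$, and set $\Pi = \{\sigma_1,\ldots,\sigma_m\}$. A direct count shows that for a uniformly random $\sigma \in S_n$ and any $i\neq j$, the bucketed joint probability $p_{ij}^{uv} := \Pr(\lceil\sigma(i)/(\eps n)\rceil = u,\;\lceil\sigma(j)/(\eps n)\rceil = v)$ equals $\eps^2 n/(n-1)$ when $u\neq v$ and $\eps(\eps n-1)/(n-1)$ when $u=v$, so $|p_{ij}^{uv}-\eps^2| = O(\eps/n)$; summing over the $1/\eps^2$ buckets, the ideal distribution already lies within total-variation distance $O(1/n) \le \delta/2$ of uniform on $[1/\eps]^2$, thanks to the hypothesis $\eps n\geq 2/\delta$. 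It therefore suffices to guarantee that every empirical frequency $\hat p_{ij}^{uv}$ stays within $\delta\eps^2/2$ of its mean $p_{ij}^{uv}$. Each such $\hat p_{ij}^{uv}$ is an average of iid Bernoulli indicators of mean $\Theta(\eps^2)$ and variance $\Theta(\eps^2)$, so Bernstein's inequality yields $\Pr(|\hat p_{ij}^{uv} - p_{ij}^{uv}| > \delta\eps^2/2) \le 2\exp(-cm\delta^2\eps^2)$ for an absolute constant $c>0$. A union bound over $\binom{n}{2}\cdot(1/\eps)^2$ events drives the failure probability strictly below $1$ once $C$ is large enough, and any realization in the complementary event yields the desired $\Pi$.

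The main potential pitfall is the choice of concentration bound. Applied naively via Hoeffding to range-$1$ indicators, the argument only gives $m = O(\log n / (\delta^2\eps^4))$, which is a factor of $\eps^{-2}$ too large. Obtaining the claimed $O((\delta\eps)^{-2}\log n)$ bound crucially requires a variance-aware inequality such as Bernstein (or multiplicative Chernoff), which exploits the fact that each bucket-pair indicator has variance only $\Theta(\eps^2)$; this is precisely what produces the improved exponent. After that, everything else is routine: the union bound, and the minor check that $\log(n/\eps) = O(\log n)$ under the hypothesis $\eps n \ge 2/\delta$.
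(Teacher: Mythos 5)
Your proof is correct and takes essentially the same route as the paper: the affine maps $\sigma_{a,b}(k) = ak + b \pmod{n}$ for the pairwise-independent family, and the probabilistic method with Bernstein's inequality (exploiting the $\Theta(\eps^2)$ variance of the bucket-pair indicators, exactly the point you flag) and a union bound over the $O(n^2/\eps^2)$ tuples for the almost-pairwise-independent family. The only difference worth noting is cosmetic: you bound the total-variation distance via a two-sided concentration estimate plus a triangle inequality through the ``ideal'' joint bucket distribution, whereas the paper uses only the one-sided lower bound $\Pr(b(\sigma(i)),b(\sigma(j)))=(u,v)) > \eps^2(1-\delta/2)$ and bounds the empirical frequency from below, then applies the positive-part formula for TV distance directly --- which sidesteps the small constant-chasing in your step ``$O(1/n)\le\delta/2$'' and keeps the arithmetic tighter.
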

\begin{proof}
  If $n$ is equal to a prime number $p$,
  for any integers $a,b$ such
  that $a$ is not divisible by $p$,
  the function $x \mapsto ax + b \pmod{p}$
  is a permutation of $[p]$. If $(a,b)$ are
  sampled uniformly at random from
  $[p-1] \times [p]$, this defines
  a pairwise independent permutation
  distribution. The reason is that
  for any $(i,j)$ and $(k,\ell)$ in
  $[p] \times [p]$ such that $i \neq j, \,
  k \neq \ell$, the system of linear congruences
  $$ ai+b \equiv k, \quad aj+b \equiv \ell \pmod{p} $$
  has the unique solution $a \equiv (k-\ell)(i-j)^{-1},
  \, b \equiv k - ai \pmod{p}$.

  For $\eps,\delta > 0$ such that $1/\eps$ and $\eps n$
  are integers and $\eps n \geq 2/\delta$, we use the
  probabilistic method to prove the existence of a
  multiset $\Pi$ of $m = O((\eps \delta)^{-2} \log n)$
  permutations such that the uniform distribution
  on $\Pi$ is $(\eps,\delta)$-almost pairwise independent.
  In fact, we will prove that the multiset obtained by
  drawing $m$ uniformly-random samples, with replacement,
  satisfies this property with positive probability.
  Define the function
  $b(k) = \left\lceil \frac{k}{\eps n} \right\rceil$,
  mapping $[n]$ to $[\frac{1}{\eps}]$.
  The $(\eps,\delta)$-almost pairwise independence
  property asserts that when $\sigma$ is a random
  sample from the distribution, for any $i \neq j$
  the distribution of the pair $(b(\sigma(i)),b(\sigma(j)))$ is
  is $\delta$-close to the uniform distribution on
  $[\frac{1}{\eps}] \times [\frac{1}{\eps}]$
  in total variation distance.
  For any pair $(u,v) \in [\frac{1}{\eps}] \times [\frac{1}{\eps}]$,
  a uniformly random $\sigma \in S_n$ satisfies:
  \begin{equation} \label{eq:pe.1}
    \Pr \big( (b(\sigma(i)),b(\sigma(j))) = (u,v) \big) \; = \;
    \begin{cases}
      \frac{\eps n}{n} \cdot \frac{\eps n - 1}{n-1} & \mbox{if } u=v \\
      \frac{\eps n}{n} \cdot \frac{\eps n}{n-1} & \mbox{if } u \neq v .
    \end{cases}
  \end{equation}
  In both cases we have
  \begin{equation} \label{eq:pe.2}
    \Pr \big( (b(\sigma(i)),b(\sigma(j))) = (u,v) \big)  >
    \frac{\eps n}{n} \cdot \frac{\eps n - 1}{n} =
    \eps \left( \eps - \frac1n \right) \geq
    \eps^2 \left( 1 - \frac{\delta}{2} \right)
  \end{equation}
  where the last inequality is a consequence of
  $\eps n \geq 2 / \delta$.

  Now, suppose $\sigma_1,\ldots,\sigma_m$ are i.i.d.\ random
  draws from the uniform distribution on $S_n$. Fix an
  arbitrary $i \neq j$ in $[n]$ and an arbitrary pair of
  buckets $(u,v) \in [\frac{1}{\eps}] \times [\frac{1}{\eps}]$.
  For $s = 1,2,\ldots,m$ let
  \begin{align*}
    Y_s & = \begin{cases}
      1 & \mbox{if } (b(\sigma(i)),b(\sigma(j))) = (u,v) \\
      0 & \mbox{otherwise}
    \end{cases} \\
    Z_s & = (\expect Y_s) - Y_s .
  \end{align*}
  In light of~\eqref{eq:pe.2}, we have $\expect \left[ \sum_{s=1}^m Y_s \right] >
  \eps^2 \left( 1 - \frac{\delta}{2} \right) m$, so the
  event $\sum_{s=1}^m Y_s \leq \eps^2 (1-\delta) m$ only
  happens when $\sum_{s=1}^m Z_s > \frac12 m \eps^2 \delta$.
  Bernstein's Inequality~\citep{bernstein} ensures that
  \begin{equation} \label{eq:pe.bern}
    \Pr \left( \sum_{s=1}^m Z_s > \tfrac12 m \eps^2 \delta \right) \; < \;
    \exp \left( - \frac{\tfrac18 m^2 \eps^4 \delta^2}{m \eps^2 + \tfrac16 m \eps^2 \delta} \right) \; = \;
    \exp \left( - \frac{m \eps^2 \delta^2}{8 (1 + \delta/6)} \right) .
  \end{equation}
  Since $\delta < 1$, the right side can be made less than $(\eps/n)^2$
  by setting $m \ge 18 (\eps \delta)^{-2} \log(n/\eps) .$
  Note that our assumption that $n$ is divisible by $1/\eps$
  implies $n / \eps \le n^2,$ hence $\log(n/\eps) \le 2 \log n$.
  Thus, drawing $m = 36 (\eps \delta)^{-2} \log n$ samples
  suffices to make
  $\Pr \left( \sum_{s=1}^m Z_s > \tfrac12 m \eps^2 \delta \right)$
  less than $(\eps/n)^2.$

  For $(i,j,u,v) \in [n] \times [n] \times [\frac{1}{\eps}] \times
  [\frac{1}{\eps}]$ with $i \neq j$, define $S(i,j,u,v) \subset S_n$
  to be the set of permutations $\sigma$ such that
  $(b(\sigma(i)),b(\sigma(j))) = (u,v)$.
  Let $\mathcal{E}(i,j,u,v)$ denote the event that
  $S(i,j,u,v)$ contains $m \eps^2 (1-\delta)$ or fewer
  of the permutations $\sigma_1,\ldots,\sigma_m$, and
  let $\mathcal{E} = \bigcup_{i,j,u,v} \mathcal{E}(i,j,u,v)$.
  Our calculation using Bernstein's Inequality showed that
  for $m \geq 36 (\eps \delta)^{-2} \log n$, we have
  $\Pr(\mathcal{E}(i,j,u,v)) < (\eps/n)^2$.
  Taking the union bound over all $n(n-1)/\eps^2$ choices
  of $i,j,u,v$, we conclude that $\Pr(\mathcal{E}) < 1$.
  Therefore, the complementary event $\overline{\mathcal{E}}$ occurs
  with positive probability. When  $\overline{\mathcal{E}}$ occurs,
  we claim the uniform distribution over
  $\{\sigma_1,\ldots,\sigma_m\}$ is $(\eps,\delta)$-almost
  pairwise independent. To verify this, consider any $i \neq j$
  and let $D_{ij}$ denote the distribution of $(b(\sigma(i)),b(\sigma(j)))$
  when $\sigma$ is drawn randomly from $\{\sigma_1,\ldots,\sigma_m\}$.
  Let $U$ denote the uniform distribution on $[\frac{1}{\eps}] \times
  [\frac{1}{\eps}]$. We have
  \begin{align*}
      \| D_{ij} - U \|_{\mathrm{TV}} &=
      \sum_{u=1}^{1/\eps} \sum_{v=1}^{1/\eps}
        \big( U(S(i,j,u,v)) -
              D_{ij}(S(i,j,u,v)) \big)^+
        \; < \;
        \left( \tfrac{1}{\eps} \right)^2
        \left( \eps^2 - \frac{m\eps^2(1-\delta)}{m} \right)
        \; = \;
        \delta
  \end{align*}
  as required by the definition of $(\eps,\delta)$-almost
  pairwise independence.
\end{proof}

\begin{lem} \label{lem:1/e}
  If $k \in \naturals$ and $r^k \geq \frac1e$ then
  $\frac{1}{k+1} \left( 1 + r + \cdots + r^k \right) \geq 1 -  \frac1e$.
\end{lem}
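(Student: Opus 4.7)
The plan is to reduce the inequality to its worst case and then bound the resulting sum via a convexity argument.

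First, I observe that $h(r) := \frac{1}{k+1}\sum_{j=0}^{k} r^j$ is nondecreasing on $[0,\infty)$, since its derivative $\frac{1}{k+1}\sum_{j=1}^k j r^{j-1}$ is nonnegative for $r \geq 0$. If $r \geq 1$, then $h(r) \geq 1 \geq 1 - \frac{1}{e}$ trivially. So I may assume $r < 1$, and by monotonicity the worst case under the constraint $r^k \geq \frac{1}{e}$ is $r = r_0 := e^{-1/k}$, where $r_0^k = \frac{1}{e}$. After this reduction, the lemma becomes
\[
  \sum_{j=0}^{k} e^{-j/k} \;\geq\; (k+1)\left(1 - \tfrac{1}{e}\right).
\]

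Next, I exploit the convexity of $t \mapsto e^{-t}$ via the trapezoidal (Hermite--Hadamard) inequality on each subinterval $[j/k, (j+1)/k]$: for every $j = 0, 1, \ldots, k-1$,
\[
  \int_{j/k}^{(j+1)/k} e^{-t}\,dt \;\leq\; \tfrac{1}{2k}\bigl(e^{-j/k} + e^{-(j+1)/k}\bigr).
\]
Summing over $j$, the right-hand side telescopes into $\frac{1}{k}\sum_{j=0}^k e^{-j/k} - \frac{1}{2k}(1 + \frac{1}{e})$, while the left-hand side sums to $\int_0^1 e^{-t}\,dt = 1 - \frac{1}{e}$. Rearranging gives
\[
  \sum_{j=0}^{k} e^{-j/k} \;\geq\; k\bigl(1 - \tfrac{1}{e}\bigr) + \tfrac{1}{2}\bigl(1 + \tfrac{1}{e}\bigr).
\]

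Finally, comparing this lower bound to the target $(k+1)(1-\frac{1}{e})$, the $k$-dependent terms cancel and it remains to verify the numerical inequality $\frac{1}{2}(1 + \frac{1}{e}) \geq 1 - \frac{1}{e}$, which simplifies to $\frac{3}{2e} \geq \frac{1}{2}$, i.e., $e \leq 3$. This holds.

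The only nontrivial choice here is picking a lower bound on $\sum e^{-j/k}$ that is sharp enough. The naive AM--GM route yields only $\frac{1}{k+1}\sum r_0^j \geq r_0^{k/2} = e^{-1/2} \approx 0.607$, which falls short of $1 - \frac{1}{e} \approx 0.632$; so the main (very small) obstacle is recognizing that one must use a \emph{second-order} convexity estimate like the trapezoidal rule rather than a Jensen/AM--GM bound. The slack in the final step is exactly $\frac{3 - e}{2e}$, consistent with the fact that equality is approached in the limit $k \to \infty$, where the Riemann sum converges to $1 - \frac{1}{e}$.
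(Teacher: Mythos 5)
Your proof is correct, and it takes a genuinely different route from the paper's. The paper also reduces to the worst case $r = e^{-1/k}$, but then writes the average in closed form as $f(k) = \frac{1}{k+1}\cdot\frac{1 - e^{-(k+1)/k}}{1 - e^{-1/k}}$ and proves $f(k) \geq 1 - \frac{1}{e}$ by showing $f$ is monotonically decreasing in $k$ with limit $1 - \frac{1}{e}$; the monotonicity step requires differentiating this expression and controlling the sign of the numerator via quadratic lower bounds on $e^x$ (Taylor estimates at $0$ and $1$), which is somewhat laborious. You instead bound the Riemann sum $\sum_{j=0}^k e^{-j/k}$ from below using the trapezoidal (Hermite--Hadamard) estimate for the convex function $e^{-t}$ on each subinterval, obtaining $\sum_{j=0}^k e^{-j/k} \geq k\left(1 - \frac{1}{e}\right) + \frac{1}{2}\left(1 + \frac{1}{e}\right)$, after which the claim reduces to the numerical fact $e \leq 3$, with slack $\frac{3-e}{2e}$ that correctly vanishes in spirit as $k \to \infty$. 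This is shorter and avoids calculus on the closed-form ratio entirely; it is notable because the paper explicitly remarks that it knew no comparably simple argument for the case $r^k = \frac{1}{e}$ (only for $r^{k+1} = \frac{1}{e}$, where a plain integral comparison suffices) --- your second-order convexity bound supplies exactly such an argument. One small caveat shared with the paper: the monotonicity-in-$r$ reduction implicitly uses $r \geq 0$, which is harmless since in the application $r$ is a power of a probability, but it is worth stating.
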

\begin{proof}
  The left side is an increasing function of $r$, so
  it suffices to prove the inequality when $r^k = \frac1e$.
  We begin by noting that if the hypothesis
  had been $r^{k+1} = \frac1e$ rather than $r^k = \frac1e$,
  the lemma would follow easily by comparing the sum to an integral.
  \[
    \frac{1}{k+1} \left( 1 + r + \cdots + r^k \right) >
    \frac{1}{k+1} \int_0^{k+1} r^t \, dt =
    \frac{1}{(k+1) \ln r} \left[ 1 - r^{k+1} \right] = 1 - \frac1e .
  \]
  We do not know of any comparably simple argument that
  works when $r^k = \frac1e$. Instead, we define the
  function
  \[
    f(k) = \frac{1}{k+1} \frac{ 1 - e^{-\frac{k+1}{k}} }{ 1 -
          e^{-\frac{1}{k}} }
  \]
  and argue that $f(k) \geq 1-\frac1e$ for all $k \geq 1$,
  by proving that $f(k)$ is monotonically decreasing in $k$
  and that $\lim_{k \to \infty} f(k) = 1 - \frac1e$.
  The computation of $\lim_{k \to \infty} f(k)$ follows
  from the facts that $1 - e^{-\frac{k}{k+1}} \to 1 - \frac1e$
  and $(k+1)(1 - e^{-1/k}) \to 1$ as $k \to \infty$.
  The monotonicity claim follows by expanding the domain of
  $f$ from the natural numbers to the real numbers and taking
  the derivative with respect to $k$.
  \[
  \frac{d f}{d k} = \frac{-e^{\frac{k+2}{k}} k^2 - k^2 +
  e^{\frac{1}{k}} ( (k+1)(k-2) + 1 ) + e^{\frac{k+1}{k}} ( k^2 + k + 1 ) }{ e \left(
  e^{\frac{1}{k}} - 1 \right)^2 (k+1)^2 k^2 }
  \]
  The denominator is positive for all $k \geq 1$
  so it
  suffices to prove that the numerator in non-positive which
  is equivalent to showing:
  \begin{align*}
  \label{ineq:toshow}
  e^{\frac{1}{k}} ( (k+1)(k-2) + 1 + ek(k+1) + e ) &\leq
  k^2 \left( 1 + e^{\frac{k+2}{k}} \right),\\
  e^{-\frac{1}{k}} + e^{\frac{k+1}{k}} &\geq
  \frac{ (e+1) (k+1)^2 - (e + 3)(k+1) + (e+1) }{k^2}
  \end{align*}
  To prove the last inequality for all $k \geq 1$, we make use
  of the following quadratic lower bounds on $e^x$ which can
  be shown using elementary calculus\footnote{Both are Taylor
  expansion approximations of $e^x$ around $0$ and $1$
  respectivelly.}.
  \begin{align*}
  &e^x \geq 1 + x + \frac{1}{e} x^2, \text{ for all } x \in [-1, 0]\\
  &e^x \geq \frac{e}{2} (x^2 + 1) \text{ for all } x \geq 1
  \end{align*}
  Notice that $-1 \leq -\tfrac{1}{k} \leq 0$ and
  $\tfrac{k+1}{k} \geq 1$ for $k \geq 1$ so the above
  inequalities apply.
  \begin{align*}
  e^{-\frac{1}{k}} + e^{\frac{k+1}{k}}
  &\geq 1 - \frac{1}{k} + \frac{1}{e} \left( \frac{k+1}{k}
  \right)^2 + \frac{e}{2} \left( \left( \frac{k+1}{k}
  \right)^2 + 1 \right)\\
  &= \frac{ \left( 1 + \frac1e + e \right) (k+1)^2 - (e + 3)(k+1) +
  \frac{e}{2} + 1 }{ k^2 }\\
  &\geq \frac{ (1 + e) (k+1)^2 - (e+3)(k+1) + (e+1) }{ k^2 }
  \end{align*}
  where the last inequality follows from $\frac1e (k+1)^2 +
  \frac{e}{2} + 2 \geq e + 1$ for $k \geq 1$.
\end{proof}

\begin{lem}[Padding Lemma] \label{lem:padding}
  If $N \in \naturals$ and $\Pi_N \subseteq S_N$ is a set of $m$
  permutations such that $\tpratio(\Pi_N) \geq \alpha$,
  then for all $n \leq N$ there is a set $\Pi_n \subseteq S_n$
  of at most $m$ permutations such that $\tpratio(\Pi_n) \geq \alpha$.
\end{lem}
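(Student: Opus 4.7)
The plan is to construct $\Pi_n$ by ``restricting'' each permutation in $\Pi_N$ to $[n]$, and then to reduce an arbitrary instance on $[n]$ to an instance on $[N]$ by padding with zero-valued random variables. Concretely, for each $\pi \in \Pi_N$, I would define $\pi' \in S_n$ by listing $\pi(1), \pi(2), \ldots, \pi(N)$ in order and retaining only the entries that lie in $[n]$, preserving their relative order; since $\pi$ restricts to a bijection from $\pi^{-1}([n])$ onto $[n]$, $\pi'$ is a well-defined permutation of $[n]$. Taking $\Pi_n = \{\pi' : \pi \in \Pi_N\}$ immediately yields $|\Pi_n| \le m$.

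To verify $\tpratio(\Pi_n) \ge \alpha$, I would fix an arbitrary $n$-tuple of non-negative independent distributions $X_1, \ldots, X_n$ and extend it to an $N$-tuple by setting $X_i \equiv 0$ for $n < i \le N$. The prophet's expected value is unaffected by this padding, so the hypothesis $\tpratio(\Pi_N) \ge \alpha$ produces some $\pi \in \Pi_N$ and some threshold $(\theta, \tth)$ whose $\pi$-adapted rule $\tau$ satisfies $\expect X_\tau \ge \alpha \cdot \expect X_*$ on the padded instance.

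The crucial step, and the main obstacle, is to argue that we may assume the threshold rule skips every padded zero position; once this is in place, the rule's behavior on the padded instance coincides exactly with the $\pi'$-adapted threshold rule on the original instance using the same threshold. When $\theta > 0$ this is automatic from the lexicographic convention, since $(0, \tX_i) < (\theta, \tth)$. When $\theta \le 0$ I would replace $(\theta, \tth)$ with $(0, 1)$, which stops only at strictly positive $X_i$ because $\tX_i < 1$ almost surely. I would justify this replacement by a sample-wise comparison: at any position where the old rule stopped on a zero, the payoff was $0$, so deferring that stopping to the next strictly positive value (or to $\bot$) produces a non-negative payoff and therefore can only increase $X_\tau$ pointwise.

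With the zero-skipping reduction in hand, the $\pi'$-adapted threshold rule on the original instance exactly replicates the modified rule on the padded instance and thus attains expected value at least $\alpha \cdot \expect X_*$. Since $X_1, \ldots, X_n$ was arbitrary, this gives $\tpratio(\Pi_n) \ge \alpha$ as required. Aside from the zero-skipping step, every piece of the argument is bookkeeping, so the entire proof should be short once the pointwise-domination comparison is recorded carefully.
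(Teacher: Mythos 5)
Your proposal is correct and follows essentially the same route as the paper: restrict each permutation to $[n]$ preserving relative order, pad the instance with zero-valued variables, and observe that a threshold rule on the padded instance is (pointwise) dominated by the corresponding restricted rule that skips the zeros. The only cosmetic difference is that you handle the $\theta \le 0$ case by first swapping in the threshold $(0,1)$, whereas the paper compares $X_{\pi,\theta}$ and $X_{\pi_{\downarrow n},\theta}$ directly and notes the discrepancy can only occur when the padded rule stops on a zero; both are the same pointwise-domination observation.
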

\begin{proof}
  For any permutation $\pi \in S_N$ let $\pi_{\downarrow n}$ denote the
  unique permutation of $[n]$ that can be expressed as the composition
  $\pi \circ f$ for some monotonically increasing $f : [n] \to [N]$.
  (This $f$ maps the elements of $[n]$ to the elements
  of $\pi^{-1}([n])$ in increasing order.)
  If $\Pi \subseteq S_N$ satisfies $|\Pi| = m$ and
  $\tpratio(\Pi) \geq \alpha$ then
  the set $\Pi_n = \{ \pi_{\downarrow n} \mid \pi \in \Pi \}$
  certainly satisfies $|\Pi_n| \leq m$; we claim it also satisfies
  $\tpratio(\Pi_n) \geq \alpha$. To see why, consider any
  independent non-negative-valued random variables
  $X_1,\ldots,X_n$. Extend this to a sequence $X_1,\ldots,X_N$
  by defining $X_i \equiv 0$ for $n < i \leq N$. Since
  $\tpratio(\Pi) \geq \alpha$ there is a $\pi \in \Pi$
  and a threshold $\theta$ such
  $\expect X_{\pi,\theta} \geq \alpha \cdot \expect X_*$.
  (Note that $\max \{X_1,\ldots,X_N\} = \max \{X_1,\ldots,X_n\}$
  so it is immaterial whether $X_*$ is interpreted as referring
  to the former or the latter quantity.) Let $\pi' = \pi_{\downarrow n}$,
  and note that $\pi' \in \Pi_n$. For any threshold $(\theta,\tth) \in
  \reals \times [0,1]$ the stopping rule $\tau(\pi,\theta,\tth)$
  stops at the earliest element in the sequence
  $X_{\pi(1)},X_{\pi(2)},\ldots,X_{\pi(n)}$ such that
  $(X_{\pi(i)},\tX_{\pi(i)}) \ge (\theta,\tth)$, whereas
  $\tau(\pi',\theta,\tth)$ stops at the earliest element
  such that $(X_{\pi(i)},\tX_{\pi(i)}) \ge (\theta,\tth)$
  {\em and} $\pi(i) \in [n]$. These two elements are the
  same unless $\theta=0$ and $X_{\pi,\theta} = 0$, because
  when $\pi(i) \not\in [n]$ we have $X_{\pi(i)} = 0$ by
  construction. Therefore, $X_{\pi',\theta} \geq X_{\pi,\theta}$
  pointwise, and we find that
  \[
    \expect X_{\pi',\theta} \geq \expect X_{\pi,\theta} \geq
    \alpha \cdot \expect X_*,
  \]
  as desired.
\end{proof}

\bibliographystyle{apalike}
\bibliography{prophet}

\begin{thebibliography}{}

\bibitem[Azar et~al., 2018]{azar18ec}
Azar, Y., Chiplunkar, A., and Kaplan, H. (2018).
\newblock Prophet secretary: Surpassing the 1-1/e barrier.
\newblock In Tardos, {\'{E}}., Elkind, E., and Vohra, R., editors, {\em
  Proceedings of the 2018 {ACM} Conference on Economics and Computation,
  Ithaca, NY, USA, June 18-22, 2018}, pages 303--318. {ACM}.

\bibitem[Bernstein, 1946]{bernstein}
Bernstein, S. (1946).
\newblock {\em The theory of probabilities}.
\newblock Gastehizdat Publishing House.

\bibitem[Correa et~al., 2019]{blind-strat}
Correa, J.~R., Saona, R., and Ziliotto, B. (2019).
\newblock Prophet secretary through blind strategies.
\newblock In Chan, T.~M., editor, {\em Proceedings of the Thirtieth Annual
  {ACM-SIAM} Symposium on Discrete Algorithms, {SODA} 2019, San Diego,
  California, USA, January 6-9, 2019}, pages 1946--1961. {SIAM}.

\bibitem[Cover, 1965]{cover1965geometrical}
Cover, T.~M. (1965).
\newblock Geometrical and statistical properties of systems of linear
  inequalities with applications in pattern recognition.
\newblock {\em IEEE transactions on electronic computers}, (3):326--334.

\bibitem[Ehsani et~al., 2018]{prophet-matroid-sec}
Ehsani, S., Hajiaghayi, M., Kesselheim, T., and Singla, S. (2018).
\newblock Prophet secretary for combinatorial auctions and matroids.
\newblock In Czumaj, A., editor, {\em Proceedings of the Twenty-Ninth Annual
  {ACM-SIAM} Symposium on Discrete Algorithms, {SODA} 2018, New Orleans, LA,
  USA, January 7-10, 2018}, pages 700--714. {SIAM}.

\bibitem[Esfandiari et~al., 2015]{proph-sec}
Esfandiari, H., Hajiaghayi, M., Liaghat, V., and Monemizadeh, M. (2015).
\newblock Prophet secretary.
\newblock In Bansal, N. and Finocchi, I., editors, {\em Algorithms - {ESA} 2015
  - 23rd Annual European Symposium, Patras, Greece, September 14-16, 2015,
  Proceedings}, volume 9294 of {\em Lecture Notes in Computer Science}, pages
  496--508. Springer.

\bibitem[Hill, 1983]{hill83}
Hill, T.~P. (1983).
\newblock Prophet inequalities and order selection in optimal stopping
  problems.
\newblock {\em Proc. AMS}, 88(1):131--137.

\bibitem[Kesselheim et~al., 2015]{nonunif}
Kesselheim, T., Kleinberg, R.~D., and Niazadeh, R. (2015).
\newblock Secretary problems with non-uniform arrival order.
\newblock In Servedio, R.~A. and Rubinfeld, R., editors, {\em Proceedings of
  the Forty-Seventh Annual {ACM} on Symposium on Theory of Computing, {STOC}
  2015, Portland, OR, USA, June 14-17, 2015}, pages 879--888. {ACM}.

\bibitem[Kleinberg and Kleinberg, 2018]{delegated-search}
Kleinberg, J.~M. and Kleinberg, R. (2018).
\newblock Delegated search approximates efficient search.
\newblock In Tardos, {\'{E}}., Elkind, E., and Vohra, R., editors, {\em
  Proceedings of the 2018 {ACM} Conference on Economics and Computation,
  Ithaca, NY, USA, June 18-22, 2018}, pages 287--302. {ACM}.

\bibitem[Krengel and Sucheston, 1977]{KS77}
Krengel, U. and Sucheston, L. (1977).
\newblock Semiamarts and finite values.
\newblock {\em Bull. Amer. Math. Soc.}, 83:745--747.

\bibitem[Krengel and Sucheston, 1978]{KS78}
Krengel, U. and Sucheston, L. (1978).
\newblock On semiamarts, amarts, and processes with finite value.
\newblock {\em Adv. in Prob. Related Topics}, 4:197--266.

\bibitem[Naor and Naor, 1993]{naor93}
Naor, J. and Naor, M. (1993).
\newblock Small-bias probability spaces: Efficient constructions and
  applications.
\newblock {\em {SIAM} J. Comput.}, 22(4):838--856.

\bibitem[Samuel-Cahn, 1984]{S-C}
Samuel-Cahn, E. (1984).
\newblock Comparison of threshold stop rules and maximum for independent
  nonnegative random variables.
\newblock {\em Annals of Probability}, 12(4):1213--1216.

\bibitem[Ta{-}Shma, 2017]{tashma17}
Ta{-}Shma, A. (2017).
\newblock Explicit, almost optimal, epsilon-balanced codes.
\newblock In Hatami, H., McKenzie, P., and King, V., editors, {\em Proceedings
  of the 49th Annual {ACM} {SIGACT} Symposium on Theory of Computing, {STOC}
  2017, Montreal, QC, Canada, June 19-23, 2017}, pages 238--251. {ACM}.

\bibitem[Tchebychev, 1852]{cheby}
Tchebychev, P. (1852).
\newblock M\'{e}moire sur les nombres premiers.
\newblock {\em Journal de math\'{e}matiques pures et appliqu\'{e}es, S\'{e}r.
  1}, pages 366--390.

\bibitem[Yan, 2011]{yan-free-order}
Yan, Q. (2011).
\newblock Mechanism design via correlation gap.
\newblock In Randall, D., editor, {\em Proceedings of the Twenty-Second Annual
  {ACM-SIAM} Symposium on Discrete Algorithms, {SODA} 2011, San Francisco,
  California, USA, January 23-25, 2011}, pages 710--719. {SIAM}.

\end{thebibliography}

\end{document}